\documentclass[journal]{IEEEtran}


\usepackage{amsmath,amssymb,epsfig,psfrag,cite,subfigure}
\include{macros}
\usepackage{graphicx}
\usepackage{epstopdf}

\usepackage{acronym}

\usepackage{bm}

\usepackage{pifont}



\usepackage{tikz}
\usetikzlibrary{calc}



\usepackage{tcolorbox}

\usepackage{accents}



\allowdisplaybreaks

\usepackage{soul}

\usepackage{amsthm}

\usepackage{nicefrac}

\usepackage{lipsum,multicol}





\usepackage{algorithm}
\usepackage{algpseudocode}

\algdef{SE}[SUBALG]{Indent}{EndIndent}{}{\algorithmicend\ }%
\algtext*{Indent}
\algtext*{EndIndent}

\usepackage{enumitem}
\usepackage{mwe}    
\usepackage{epsfig,psfrag}
\usepackage{subfigure}
\usepackage{color}
\usepackage{url}
\usepackage{mathtools,xparse}

\usepackage{gensymb}

\usepackage[multiple]{footmisc}

\usepackage{xr}

\makeatletter
\newcommand*{\addFileDependency}[1]{
  \typeout{(#1)}
  \@addtofilelist{#1}
  \IfFileExists{#1}{}{\typeout{No file #1.}}
}
\makeatother

\newcommand*{\myexternaldocument}[1]{%
    \externaldocument{#1}%
    \addFileDependency{#1.tex}%
    \addFileDependency{#1.aux}%
}

\myexternaldocument{supp}


\usepackage{xcolor}

\setlength{\textfloatsep}{1pt }
\setlength{\abovecaptionskip}{1pt} 
\setlength{\belowcaptionskip}{1pt} 
\setlength{\abovedisplayskip}{2.5pt}
\setlength{\belowdisplayskip}{2.5pt}



\newcommand\abs[1]{\big|#1\big|}
\newcommand\abss[1]{\lvert#1\rvert}
\newcommand\norm[1]{\left\lVert #1\right\rVert}



\newcommand{\thn}[1]{ {#1^{\rm{th} } } }

\newcommand{\Eee}{\mathbb{E}}

\newcommand{\msens}{\mathcal{S}}
\newcommand{\mcom}{\mathcal{C}}

\newcommand{\mest}{\mathcal{E}}

\newcommand{\snrmean}{{\rm{SNR}}_{\rm{mean}}}
\newcommand{\snr}{{\rm{SNR}}}
\newcommand{\pfa}{P_{\rm{fa}}}
\newcommand{\pd}{P_{\rm{d}}}

\newcommand{\sigmaakb}{\sigma_{\alpha_{b,k}}}
\newcommand{\sigmaaab}{\sigma_{\alpha_b}}

\newcommand{\Tsym}{ T_{\rm{sym}} }

\newcommand{\Tcp}{ T_{\rm{cp}} }

\newcommand{\FF}{ \mathbf{F} }

\newcommand{\deltaf}{ \Delta f }

\newcommand{\fc}{ f_c }
\newcommand{\Ntx}{ N_{\rm{T}} }
\newcommand{\Nrx}{ N_{\rm{R}} }

\newcommand{\atx}{ \aaa_{\rm{T}} }
\newcommand{\arx}{ \aaa_{\rm{R}} }

\newcommand{\quot}[1]{``{#1}''}

\newcommand{\sm}{ s_{m} }
\newcommand{\hhhat}{ \widehat{\hh} }

\newcommand{\Ktilde}{\widetilde{K}}

\newcommand{\alphat}{ \widetilde{\alpha} }
\newcommand{\taut}{ \widetilde{\tau} }
\newcommand{\nut}{ \widetilde{\nu} }
\newcommand{\thetat}{  \widetilde{\theta} }
\newcommand{\dtilde}{ \widetilde{d} }
\newcommand{\sigmarcskt}{{ \widetilde{\sigma}_{{\rm{rcs}},k} }}

\newcommand{\mM}{{ \mathcal{M} }}

\newcommand{\xnm}{ x_{n,m} }

\newcommand{\hnm}{ h_{n,m} }

\newcommand{\ns}{ N_{\mathrm{s}} }

\newcommand{\boldHhat}{ \widehat{\boldH} }

\newcommand{\Rhat}{ \widehat{R} }
\newcommand{\vhat}{ \widehat{v} }

\newcommand{\rhosq}{ \sqrt{\rho} }
\newcommand{\rhosqq}{ \sqrt{1-\rho} }
\newcommand{\ffms}{ \ff_{m,{\rm{s}}} }
\newcommand{\ffmc}{ \ff_{m,{\rm{c}}} }

\newcommand{\fft}{ \widetilde{\ff} }

\newcommand{\sdelta}{\sss_{\Delta}}

\newcommand{\alphahat}{ \widehat{\alpha} }

\newcommand{\alphahatb}{ \boldsymbol{\alphahat}}
\newcommand{\alphab}{ \boldsymbol{\alpha}}

\newcommand{\tauhatbi}[2]{  \tauhat_{#1,#2} }
\newcommand{\nuhatbi}[2]{  \nuhat_{#1,#2} }
\newcommand{\thetahatbi}[2]{  \thetahat_{#1,#2} }
\newcommand{\alphahatbi}[2]{  \alphahat_{#1,#2} }

\newcommand{\ycom}{y^{{\rm{com}}}}
\newcommand{\hhcom}{\hh^{{\rm{com}}}}

\newcommand{\boldYcom}{\mathbf{Y}^{{\rm{com}}}}

\newcommand{\mtCN}{{\mathcal{CN}}}

\newcommand{\thetamax}{\theta_{\rm{max}}}



\newcommand{\vecc}[1]{ {\rm{vec}}\left(#1\right)  }
\newcommand{\veccs}[1]{ {\rm{vec}}\big(#1\big)  }

\newcommand{\diag}[1]{ {\rm{diag}}\left(#1\right)  }

\newcommand{\Imatrix}{{ \boldsymbol{\mathrm{I}} }}

\newcommand{\bphi}{ \bm{\phi} }

\newcommand{\aaa}{\mathbf{a}}
\newcommand{\cc}{ \mathbf{c} }
\newcommand{\bb}{ \mathbf{b} }
\newcommand{\nn}{ \mathbf{n} }

\newcommand{\sigmacsq}{{ \sigma^2_{\rm{c}} }}

\newcommand{\nuhat}{{ \widehat{\nu} }}
\newcommand{\tauhat}{{ \widehat{\tau} }}

\newcommand{\thetahat}{{ \widehat{\theta} }}

\newcommand{\boldzero}{{ {\boldsymbol{0}} }}

\newcommand{\Pt}{{ P_{\rm{T}} }}

\newcommand{\sigmarcsk}{{ \sigma_{{\rm{rcs}},k} }}

\newcommand{\ff}{\mathbf{f}}

\newcommand{\boldY}{ \mathbf{Y} }

\newcommand{\hhatdd}{ \boldHhat^{\rm{DD}} }

\newcommand{\boldX}{ \mathbf{X} }

\newcommand{\boldW}{ \mathbf{W} }

\newcommand{\boldZ}{ \mathbf{Z} }

\newcommand{\boldH}{ \mathbf{H} }

\newcommand{\boldN}{ \mathbf{N} }

\newcommand{\boldD}{ \mathbf{D} }

\newcommand{\boldA}{ \mathbf{A} }

\newcommand{\yy}{ \mathbf{y} }
\newcommand{\yyhat}{ \widehat{\yy}}

\newcommand{\hh}{ \mathbf{h} }

\newcommand{\zz}{ \mathbf{z} }

\newcommand{\kthh}{\mathcal{K}_{(\tauhat, \nuhat)}}

\newcommand{\hhbar}{ \widebar{\hh} }

\newcommand{\Xcal}{\mathcal{X}}

\newcommand{\sss}{ \mathbf{s} }

\newcommand{\sssh}{ \widehat{\mathbf{s}} }

\newcommand{\CC}{ \mathbf{C} }

\newcommand{\transpose}[1]{ {#1}^{T} }

\newcommand{\complexset}[2]{ \mathbb{C}^{#1 \times #2}  }
\newcommand{\complexsett}{ \mathbb{C}  }
\newcommand{\realset}[2]{ \mathbb{R}^{#1 \times #2}  }

\newcommand{\conj}{ {\ast} }


\makeatletter \renewcommand\d[1]{\ensuremath{%
		\;\mathrm{d}#1\@ifnextchar\d{\!}{}}}
\makeatother

\makeatletter
\newcommand*\rel@kern[1]{\kern#1\dimexpr\macc@kerna}
\newcommand*\widebar[1]{%
  \begingroup
  \def\mathaccent##1##2{%
    \rel@kern{0.8}%
    \overline{\rel@kern{-0.8}\macc@nucleus\rel@kern{0.2}}%
    \rel@kern{-0.2}%
  }%
  \macc@depth\@ne
  \let\math@bgroup\@empty \let\math@egroup\macc@set@skewchar
  \mathsurround\z@ \frozen@everymath{\mathgroup\macc@group\relax}%
  \macc@set@skewchar\relax
  \let\mathaccentV\macc@nested@a
  \macc@nested@a\relax111{#1}%
  \endgroup
}
\makeatother


\theoremstyle{remark}

\newtheoremstyle{mytheoremstyle} 
    {\topsep}                    
    {\topsep}                    
    {\upshape}                   
    {.5em}                           
    {\itshape}                   
    {.}                          
    {.5em}                       
    {}  

\theoremstyle{plain}

\newtheoremstyle{iremark}
  {\topsep}   
  {\topsep}   
  {\upshape}  
  {0.2in}       
  {\itshape}  
  {.}         
  {5pt plus 1pt minus 1pt} 
  {\thmname{#1}\thmnumber{ \itshape#2}\thmnote{ (#3)}} 



\newtheorem{theorem}{Theorem}
\newtheorem{lemma}[theorem]{Lemma}



\theoremstyle{definition}
\newtheorem*{proof}{Proof}

\theoremstyle{definition}

\acrodef{RIS}{reconfigurable intelligent surface}
\acrodef{SNR}{signal-to-noise ratio}
\acrodef{ISAC}{integrated sensing and communication}
\acrodef{ISLAC}{integrated sensing, localization, and communication}
\acrodef{LOS}{line-of-sight}
\acrodef{NLOS}{non-line-of-sight}
\acrodef{AOA}{angle-of-arrival}
\acrodef{AOD}{angle-of-departure}
\acrodef{UE}{user equipment}
\acrodef{NF}{near-field}
\acrodef{BS}{base station}
\acrodef{MCRB}{misspecified Cram\'{e}r-Rao bound}
\acrodef{CRB}{Cram\'{e}r-Rao bound}
\acrodef{LB}{lower bound}
\acrodef{ML}{maximum-likelihood}
\acrodef{MML}{mismatched maximum-likelihood}
\acrodef{DL}{downlink}
\acrodef{UL}{uplink}
\acrodef{MIMO}{multiple-input multiple-output}
\acrodef{MISO}{multiple-input single-output}
\acrodef{SISO}{single-input single-output}
\acrodef{SIP}{shift invariance property}
\acrodef{FIM}{Fisher information matrix}
\acrodef{RMSE}{root mean-squared error}
\acrodef{AWGN}{additive white Gaussian noise}
\acrodef{ADMM}{alternating direction method of multipliers}
\acrodef{LS}{least-squares}
\acrodef{SOC}{second-order cone}
\acrodef{CFO}{carrier frequency offset}
\acrodef{i.i.d.}{independently and identically distributed}
\acrodef{MI}{mutual information}
\acrodef{SAC}[S\&C]{sensing and communication}

\acrodef{ULA}{uniform linear array}

\graphicspath{{./Figures/}}

\begin{document}
\bstctlcite{IEEEexample:BSTcontrol}

\title{Fundamental Trade-Offs in Monostatic ISAC: \\A Holistic Investigation Towards 6G}

\author{Musa Furkan Keskin, \textit{Member, IEEE}, Mohammad Mahdi Mojahedian, \textit{Member, IEEE}, Jesus O. Lacruz, \\ Carina Marcus, Olof Eriksson, Andrea Giorgetti, \textit{Senior Member, IEEE}, \\Joerg Widmer, \textit{Fellow, IEEE}, Henk Wymeersch, \textit{Fellow, IEEE}
\thanks{This work is supported, in part, by the Vinnova RADCOM2 project under Grant 2021-02568, by the SNS JU project 6G-DISAC under the EU's Horizon Europe research and innovation program under Grant Agreement No 101139130, and by the European Union under the Italian National Recovery and Resilience Plan (NRRP) 
of NextGenerationEU, partnership on \quot{Telecommunications of the Future} (PE00000001 - program \quot{RESTART}).
}
}

\maketitle


\begin{abstract}

     This paper undertakes a holistic investigation of two fundamental trade-offs in monostatic OFDM integrated sensing and communication (ISAC) systems—namely, the time-frequency trade-off and the spatial trade-off, originating from the choice of modulation order for random data and the design of beamforming strategies, respectively. To counteract the elevated side-lobe levels induced by varying-amplitude data in high-order QAM signaling, we propose a novel linear minimum mean-squared-error (LMMSE) estimator, capable of maintaining robust sensing performance across a wide range of SNRs. Moreover, we explore spatial domain trade-offs through two ISAC transmission strategies: concurrent, employing joint beams, and time-sharing, using separate, time-non-overlapping beams for sensing and communications. Simulations demonstrate superior performance of the LMMSE estimator, especially in detecting weak targets in the presence of strong ones with high-order QAM, consistently yielding more favorable ISAC trade-offs than existing baselines under various modulation schemes, SNR conditions, RCS levels and transmission strategies. We also provide experimental results to validate the effectiveness of the LMMSE estimator in reducing side-lobe levels, based on real-world measurements.

	\textit{Index Terms--} OFDM, ISAC, monostatic sensing, LMMSE estimator, time-frequency trade-off, spatial trade-off, concurrent transmission, time-sharing transmission.
	\vspace{-0.1in}
\end{abstract}

\section{Introduction}
\subsection{Background and Motivation}
As research and standardization efforts for 6G intensify, integrated sensing and communications (ISAC) stands out as a key enabler that can 
facilitate high-quality connectivity and endow networks with intrinsic sensing capability \cite{jointRadCom_review_TCOM,Eldar_SPM_JRC_2020,JCAS_Survey_2022,LiuSeventy23}. This convergence promises to revolutionize network functionalities that extend beyond traditional communication-only paradigms by making sensing a fundamentally built-in component rather than an add-on feature \cite{Fan_ISAC_6G_JSAC_2022,sensingService6G_2024}. In ISAC configurations, the monostatic approach is gaining momentum, especially given its ability to exploit the entire communication data for sensing purposes and its potential to mitigate the stringent synchronization challenges arising in bistatic and multistatic systems \cite{5G_NR_JRC_analysis_JSAC_2022,Barneto2022TCOM,OFDM_PN_TSP_Exploitation_2023,pegoraro2024jump}. Among various waveform candidates for ISAC implementation, the orthogonal frequency-division multiplexing
(OFDM) waveform emerges as a natural choice owing to its widespread adoption across current wireless standards such as 5G, 5G-Advanced, WiFi/WLAN and DVB-T \cite{3gpp_r18_commMag}. The inherent characteristics of OFDM, including its high spectral efficiency, immunity to multipath effects, high-accuracy and low-complexity radar operation and design flexibility, make it an ideal choice for ISAC applications \cite{RadCom_Proc_IEEE_2011,ICI_OFDM_TSP_2020,OFDM_DFRC_TSP_2021,Liu_Reshaping_OFDM_ISAC_2023}. 

In the context of monostatic ISAC, OFDM radar sensing with 5G waveforms has recently received increasing attention \cite{5G_NR_JRC_analysis_JSAC_2022,PRS_ISAC_5G_TVT_2022,li2023isac,5g_sensing_TAES_2023,refSig_6G_ISAC_2023,MIMO_OFDM_3D_estimation_2023,ofdm_isac_silvia_2024}. Evaluations have been conducted on the sensing performance of a base station (BS) acting as a monostatic radar with downlink 5G new radio (NR) OFDM signals, focusing on the use of pilot/reference symbols alone \cite{PRS_ISAC_5G_TVT_2022,refSig_6G_ISAC_2023} as well as a combination of pilot/reference and data symbols \cite{5G_NR_JRC_analysis_JSAC_2022,li2023isac}. To achieve satisfactory target detection performance with sparse pilots, maintaining low sidelobe levels and eliminating ambiguities becomes critical in 5G/6G OFDM sensing \cite{ofdm_isac_silvia_2024}. In \cite{5g_sensing_TAES_2023,PRS_ISAC_5G_TVT_2022}, various 5G signals and channels, including SSB, PRS, PDSCH and CSI-RS, have been investigated in terms of their delay-Doppler ambiguity function (AF) characteristics. Similarly, the study in \cite{li2023isac} evaluates tracking performance using 5G NR signals within V2X networks, leading to guidelines on the design of 5G-Advanced and 6G frame structures \cite{li2023isac,PRS_ISAC_5G_TVT_2022}. Overall, these studies highlight the significant potential of employing OFDM waveforms for sensing purposes, while also noting the challenges associated with target detection due to suboptimal side-lobe performance.

Another major challenge pertaining to side-lobe levels in monostatic OFDM sensing emerges from the use of random communication data \cite{Liu_Reshaping_OFDM_ISAC_2023,mimo_ofdm_isac_sidelobes_2024,subspace_OFDM_ISAC_ICASSP_2024}. This issue unveils a crucial inherent trade-off in monostatic OFDM ISAC systems, namely \textit{the time-frequency trade-off}, falling under the umbrella of deterministic-random trade-offs \cite{LiuSeventy23,drt_isac_2023}. The time-frequency trade-off stems from the choice of modulation order for random data: higher-order QAM enhances communication rates but compromises sensing performance, as the random, non-constant-modulus data leads to elevated side-lobe levels \cite{Liu_Reshaping_OFDM_ISAC_2023}. On the other hand, employing constant-modulus QPSK minimizes side-lobe levels, enhancing sensing performance but at the cost of lower communication rates. To explore and enhance the time-frequency trade-off, various transmit optimization schemes have been recently proposed, focusing on OFDM \cite{mimo_ofdm_isac_sidelobes_2024,Liu_Reshaping_OFDM_ISAC_2023,inputOptOFDM_DFRC_2023} as well as generic communication systems (along the lines of broader deterministic-random trade-offs) \cite{drt_isac_2023,sensingRandom_2023,IT_JCAS_Gaussian_2023}. In \cite{drt_isac_2023,inputOptOFDM_DFRC_2023,IT_JCAS_Gaussian_2023}, multiple ISAC trade-off optimization problems have been formulated to optimize the input data distribution for improving the time-frequency trade-off. Considering MIMO-OFDM ISAC systems, \cite{mimo_ofdm_isac_sidelobes_2024} designs a symbol-level transmit precoder to minimize the range-Doppler integrated side-lobe level under multi-user communication quality-of-service (QoS) constraints. Similarly, the work in \cite{Liu_Reshaping_OFDM_ISAC_2023} proposes a probabilistic constellation shaping (PCS) approach to maximize the achievable rate under constraints on the variance of the radar AF with random OFDM data.

\vspace{-0.1in}
\subsection{Research Gaps}

Despite extensive research into sensing algorithms \cite{5G_NR_JRC_analysis_JSAC_2022,PRS_ISAC_5G_TVT_2022,li2023isac,5g_sensing_TAES_2023,refSig_6G_ISAC_2023,MIMO_OFDM_3D_estimation_2023} and trade-off analysis \cite{mimo_ofdm_isac_sidelobes_2024,Liu_Reshaping_OFDM_ISAC_2023,inputOptOFDM_DFRC_2023,drt_isac_2023,sensingRandom_2023,IT_JCAS_Gaussian_2023} in OFDM ISAC systems, a number of crucial topics remain unexplored. First, previous studies on OFDM sensing have typically utilized QPSK signaling \cite{5G_NR_JRC_analysis_JSAC_2022,mimo_ofdm_isac_sidelobes_2024,PRS_ISAC_5G_TVT_2022,5g_sensing_TAES_2023}, which compromises communication performance, or have implemented QAM data with reciprocal filtering \cite{MIMO_OFDM_3D_estimation_2023,RadCom_Proc_IEEE_2011,OFDM_Radar_Corr_TAES_2020,li2023isac}, which leads to elevated side-lobe levels at low SNRs due to enhanced noise power \cite{mimo_ofdm_isac_sidelobes_2024}. On the other hand, employing matched filtering \cite{OFDM_Radar_Corr_TAES_2020} with QAM signaling as an alternative to reciprocal filtering results in increased side-lobe levels at high SNRs, caused by varying-amplitude data. Hence, no universally effective sensing algorithm exists that maintains robust performance across a wide range of SNR conditions, particularly when dealing with high-order modulations in monostatic OFDM ISAC systems. Moreover, prior works on time-frequency trade-off analysis focus exclusively on ISAC \textit{transmit} signal optimization \cite{mimo_ofdm_isac_sidelobes_2024,Liu_Reshaping_OFDM_ISAC_2023,inputOptOFDM_DFRC_2023,drt_isac_2023,sensingRandom_2023,IT_JCAS_Gaussian_2023} without tackling the problem of sensing \textit{receiver} design. This leads to several drawbacks as optimizing transmitter design to tune ISAC trade-offs may hamper communication functionality and necessitate conveying additional control information to the communication receiver for symbol decoding, thereby increasing overhead. Conversely, the receiver design strategy adopted in this paper aligns with the \textit{opportunistic sensing} paradigm \cite{grossi2020adaptive}, which does not interfere with the communication system. Finally, the \textit{spatial trade-off}, resulting from the design of ISAC beamformers (e.g., \cite{5G_NR_JRC_analysis_JSAC_2022}), has been rarely studied alongside the time-frequency trade-off, which merits further investigation considering their complex interplay within various transmission schemes, such as \textit{concurrent} and \textit{time-sharing} \cite{IT_JCAS_2022}.

In light of the existing literature on monostatic ISAC systems, several fundamental questions remain unanswered:
\begin{itemize}
    \item How can robust sensing algorithms be designed for monostatic OFDM ISAC systems, accommodating known yet \textit{varying-amplitude random data symbols} in the time-frequency domain generated by high-order modulation schemes? To what extent can we mitigate the \textit{masking effect} in both single- and multi-target environments, resulting from increased side-lobe levels induced by non-constant-modulus data?

    \item What spatial domain trade-offs arise from the choice of ISAC transmit beamformers? How do different transmission strategies, namely, \textit{concurrent and time-sharing transmission}, perform under various settings concerning the sensing algorithm, percentage of sensing-dedicated pilots and modulation order of random data? 
    \item Given the inherent tension between \ac{SAC} in time-frequency and spatial domains, how can communication data rates and accompanying ISAC trade-offs be realistically evaluated under different modulations and power allocations between \ac{SAC} beams?
\end{itemize}

\begin{figure*}
	\centering
    \vspace{-0.1in}
	\includegraphics[width=0.9\linewidth]{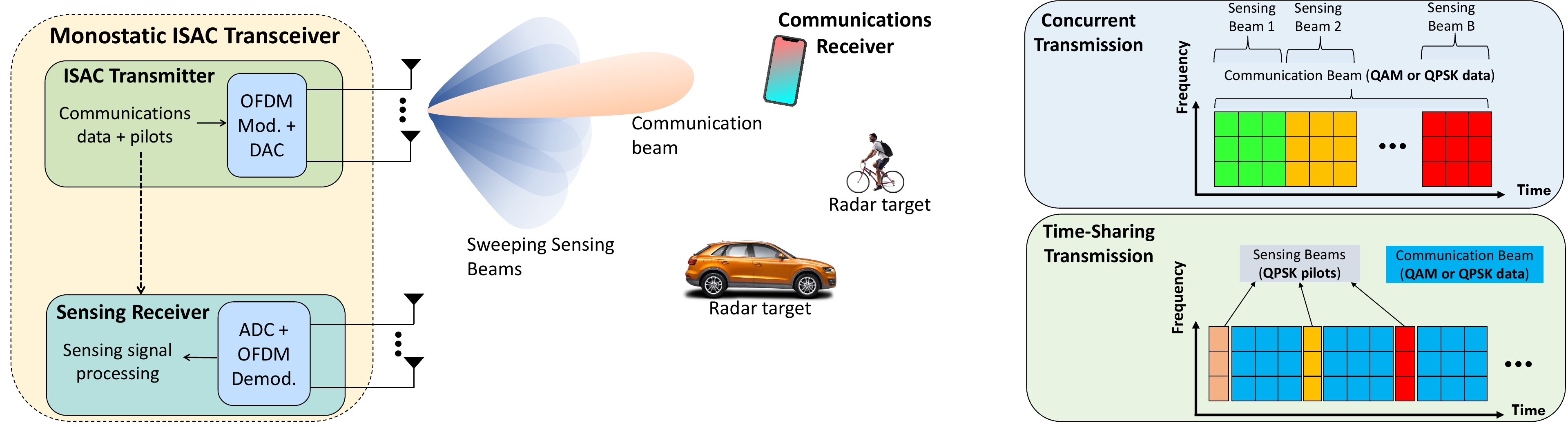}
	\vspace{-0.1in}
	\caption{Monostatic ISAC system featuring a monostatic ISAC transceiver, which integrates an ISAC transmitter and a sensing receiver on the same hardware platform, and a communications receiver on a separate device. In concurrent transmission, sensing receiver utilizes all communication data, with transmit power distributed between sensing and communication beams. In time-sharing transmission, dedicated sensing and communications beams are transmitted in a time-multiplexed fashion.} 
	\label{fig_isac_system}
	\vspace{-5mm}
\end{figure*}

\vspace{-0.1in}
\subsection{Contributions}
With the aim of addressing the identified research gaps towards 6G, this study performs a comprehensive investigation of fundamental trade-offs in monostatic OFDM ISAC systems by introducing novel sensing algorithms, beamforming/transmission strategies and holistic performance evaluations using both simulated and experimental data. The main contributions can be summarized as follows:
\begin{itemize}
    \item \textbf{Sensing with Varying-Amplitude Data in OFDM ISAC (Time-Frequency Trade-Offs):} We introduce a novel MIMO-OFDM radar sensing algorithm based on an LMMSE estimate of the time-frequency domain radar channel, specifically tailored for detection with varying-amplitude random communication data in monostatic ISAC systems. The proposed algorithm can effectively suppress high side-lobes induced by varying-amplitude data and significantly improve detection capability for weak targets, outperforming the conventional OFDM sensing algorithms (i.e., matched and reciprocal filtering \cite{MIMO_OFDM_3D_estimation_2023,RadCom_Proc_IEEE_2011,OFDM_Radar_Corr_TAES_2020,5G_NR_JRC_analysis_JSAC_2022}).

\item \textbf{ISAC Transmission Strategies  (Spatial Trade-Offs):} To analyze the spatial domain trade-offs, we investigate two ISAC transmission strategies: concurrent transmission, where the same beam is used for \ac{SAC}, and time-sharing transmission, which uses dedicated beams for \ac{SAC} non-overlapping in time. We also introduce a mutual information (MI) approximation method that leverages Monte Carlo sampling techniques to evaluate data rates and the resulting ISAC trade-offs under various modulation schemes.
    
    \item \textbf{Holistic Performance Investigation via Simulations and Experiments:} To offer comprehensive guidelines for the design of 6G systems, we carry out extensive simulations to investigate fundamental ISAC trade-offs within the time-frequency and spatial domains under a wide array of transmission settings and channel conditions, including modulation order, SNR and power/time allocation between \ac{SAC}. The proposed LMMSE estimator demonstrates substantial improvements in the trade-offs between probability of detection and achievable rate compared to existing baselines. Moreover, we identify an intricate, scenario-dependent interplay between concurrent and time-sharing strategies, determined by two opposing factors: full-data utilization for sensing and impact of QAM data on sensing, which are beneficial and detrimental, respectively, for the concurrent strategy. We also provide experimental results to verify the effectiveness of the LMMSE estimator on real measurements.

\end{itemize}

\section{System Model and Problem Description}
Consider a monostatic OFDM ISAC system consisting of two entities, as shown in Fig.~\ref{fig_isac_system}: a multiple-antenna dual-functional ISAC transceiver and a single-antenna communications receiver (RX). The ISAC transceiver contains \textit{(i)} an ISAC transmitter (TX) with an $\Ntx$-element \ac{ULA} that sends data/pilot symbols to the communications RX and \textit{(ii)} a sensing RX with an $\Nrx$-element \ac{ULA} that performs radar sensing by processing the backscattered signals for target detection, estimation and tracking \cite{Fan_ISAC_6G_JSAC_2022,JCAS_Survey_2022,RadCom_Proc_IEEE_2011}. In the monostatic configuration under consideration, the sensing RX is co-located on the same device as the ISAC TX, thereby sharing the same oscillator and having access to the entire OFDM transmit signal \cite{dfrc_mimo_ofdm_TSP_2023,5G_NR_JRC_analysis_JSAC_2022,OFDM_PN_TSP_Exploitation_2023}. To ensure that the sensing RX experiences no self-interference during full-duplex operation, we assume that the TX/RX antennas at the ISAC transceiver are sufficiently isolated \cite{RadCom_Proc_IEEE_2011,SI_5G_2015,OFDM_Radar_Phd_2014,80211_Radar_TVT_2018,OFDM_FD_LTE_2019,Fan_ISAC_6G_JSAC_2022,OFDM_5G_6G_Sensing_TWC_2021}. In this section, we introduce the OFDM transmit signal model, different ISAC transmission/beamforming strategies and provide received signal models at both the sensing and communication RXs. Additionally, we formulate the problems of interest tackled throughout the paper.

\vspace{-3mm}
\subsection{Transmit Signal Model}
We consider an OFDM frame with $N$ subcarriers and $M$ symbols. The complex baseband transmit signal for the $\thn{m}$ symbol can be expressed as \cite{RadCom_Proc_IEEE_2011,5G_NR_JRC_analysis_JSAC_2022}
\begin{align} \label{eq_st}
    s_m(t) = \frac{1}{\sqrt{N}} \sum_{n = 0}^{N-1}  \xnm \, e^{j 2 \pi n \deltaf t} g\left(\frac{t - m\Tsym}{\Tsym}\right) \,,
\end{align}
where $\xnm$ is the data/pilot on the $\thn{n}$ subcarrier and the $\thn{m}$ symbol, $\deltaf = 1/T$ is the subcarrier spacing with $T$ representing the elementary symbol duration, $\Tsym = T + \Tcp$ is the total symbol duration including the cyclic prefix (CP) $\Tcp$, and $g(t)$ is a rectangular pulse that takes the value $1$ for $t \in [0, 1]$ and $0$ otherwise. Employing single-stream beamforming \cite{80211_Radar_TVT_2018,MIMO_OFDM_ICI_JSTSP_2021,5G_NR_JRC_analysis_JSAC_2022}, the passband transmit signal over the TX array for the entire OFDM frame is given by
\begin{align}\label{eq_passband_st}
\Re \Big\{ \sum_{m = 0}^{M-1} \ff_m \sm(t) e^{j 2 \pi \fc t} \Big\} \,,
\end{align}
where $\ff_m \in \complexset{\Ntx}{1}$ is the TX beamforming (BF) vector applied for the $\thn{m}$ symbol and $\fc$ is the carrier frequency. Denoting by $\Pt$ the transmit power, we set $\norm{\ff_m}^2 = \Pt \, \forall m$ and $\Eee\{ \abss{\xnm}^2 \} = 1$.
\vspace{-3mm}
\subsection{ISAC Transmission Strategies}\label{sec_isac_str}
We investigate two ISAC transmission strategies concerning the choice of $\xnm$ and $\ff_m$ in \eqref{eq_passband_st}.

\subsubsection{Concurrent Transmission}\label{sec_conc}
In the concurrent transmission, a common beam is utilized simultaneously for \ac{SAC}. Using the multibeam approach, the TX BF vector employed at the $\thn{m}$ symbol is given by \cite{multibeam_JRC_TVT,5G_NR_JRC_analysis_JSAC_2022}
\begin{align} \label{eq_ffm}
    \ff_m = \rhosq \, \ffms + \rhosqq \, \ffmc  \,,
\end{align}
where $\ffms \in \complexset{\Ntx}{1}$ and $\ffmc \in \complexset{\Ntx}{1}$ represent, respectively, the sensing and communication BF vectors, and $0 \leq \rho \leq 1$ denotes the ISAC weight that controls the trade-off between \ac{SAC}. In this strategy, all $\xnm$'s are assumed to be data symbols\footnote{For ease of exposition and analysis, the communication channel is assumed to be perfectly estimated a-priori, eliminating the need for pilot symbols. While the proposed framework can theoretically be extended to account for the impact of communication pilots on ISAC performance trade-offs, exploring this aspect is beyond the scope of the current study and is reserved for future research.} intended for the communications RX, while the sensing RX exploits the entire frame for radar sensing. 

\subsubsection{Time-Sharing Transmission}\label{sec_tst}
In the time-sharing transmission, dedicated beams for sensing and communications are used in non-overlapping time slots \cite{IT_JCAS_2022}. Thus, at each OFDM symbol, the ISAC TX transmits either a sensing beam or a communication beam. More formally, 
\begin{align}\label{eq_tst}
    \ff_m = \begin{cases}
	\ffms,&~ m \in \msens   \\
	\ffmc,&~ m \in \mcom
	\end{cases} \,, ~
 \xnm = \begin{cases}
	{\rm{pilot}},&~ m \in \msens   \\
	{\rm{data}},&~ m \in \mcom
	\end{cases} \, ,
\end{align}
where $\msens \cup \mcom = \{0, \ldots, M-1\}$ and $\msens \cap \mcom = \emptyset$. In \eqref{eq_tst}, `pilot' refers to dedicated \textit{sensing pilots with unit amplitude} \cite{dedicatedSensingISAC_2024}, whereas `data' could be either unit-amplitude (e.g., QPSK) or varying-amplitude (e.g., QAM). The sensing RX utilizes only the pilots for sensing, while the communications rate might be compromised due to the replacement of a portion of data by pilots. The time-sharing ratio $\abss{\msens}/M$ allows tuning the trade-off between \ac{SAC}.

\vspace{-3mm}
\subsection{Sensing Signal Model}
Given 
\eqref{eq_st} and \eqref{eq_passband_st},  the backscattered signal at the sensing RX array over subcarrier $n$ and symbol $m$ after the CP removal and FFT operations can be written as \cite{overview_SP_JCS_JSTSP_2021}
\begin{align} \label{eq_ynm}
    \yy_{n,m} = \boldH_{n,m} \ff_m \xnm + \nn_{n,m} \in \complexset{\Nrx}{1}  \,,
\end{align}
where $\boldH_{n,m} \in \complexset{\Nrx}{\Ntx}$ denotes the channel matrix at the $\thn{n}$ subcarrier and the $\thn{m}$ symbol, and  $\nn_{n,m} \in \complexset{\Nrx}{1}$ is the \ac{AWGN} component with $\nn_{n,m} \sim \mtCN(\boldzero, \sigma^2 \Imatrix)$ and $\sigma^2 = N_0 N \deltaf$, with $N_0$ representing the noise power spectral density (PSD). 
Considering the presence of $K$ point targets in the environment, the sensing channel can be expressed as
\begin{align} \label{eq_sens_channel}
    \boldH_{n,m} = \sum_{k=0}^{K-1} \alpha_k e^{-j 2 \pi n \deltaf \tau_k} e^{j 2 \pi m \Tsym \nu_k} \arx(\theta_k) \atx^T(\theta_k)   \,,
\end{align}
where $\alpha_k$, $\tau_k$, $\nu_k$ and $\theta_k$ denote, respectively, the complex channel gain (including the effects of path attenuation and radar cross section (RCS)), round-trip delay, Doppler shift and \ac{AOA}/\ac{AOD} of the $\thn{k}$ target. Here, $\tau_k = 2 d_k/c$ and $\nu_k = 2 v_k/\lambda$, with $c$, $\lambda = c/\fc$, $d_k$ and $v_k$ denoting the speed of propagation, the wavelength, the range and velocity of the $\thn{k}$ target, respectively. In addition, the channel gain is given by the radar range equation $\abss{\alpha_k}^2 =   \sigmarcsk \lambda^2 / [(4 \pi)^3 d_k^4 ]$ \cite[Eq.~(2.8)]{richards2005fundamentals},
where $\sigmarcsk$ denotes the RCS of the $\thn{k}$ target. Moreover, the ULA steering vectors are defined as
\begin{align}
        \atx(\theta) &= \transpose{ \big[ 1  ~ e^{j \frac{2 \pi}{\lambda} d \sin(\theta)}~ \ldots ~ e^{j \frac{2 \pi}{\lambda} d (\Ntx-1) \sin(\theta)} \big] } \,, \\ \label{eq_ar_steer}
        \arx(\theta) &= \transpose{ \big[ 1  ~ e^{j \frac{2 \pi}{\lambda} d \sin(\theta)}~ \ldots ~  e^{j \frac{2 \pi}{\lambda} d (\Nrx-1) \sin(\theta)} \big] } \,,
\end{align}
where $d = \lambda/2$ denotes the antenna element spacing.

\vspace{-3mm}
\subsection{Communications Signal Model}
The signal received at the communications RX over subcarrier $n$ and symbol $m$ is given by
\begin{align} \label{eq_ynmcom}
    \ycom_{n,m} = (\hhcom_{n,m})^T \ff_m \xnm + z_{n,m} \in \complexsett  ~,
\end{align}
where $\hhcom_{n,m} \in \complexset{\Ntx}{1}$ is the communication channel over subcarrier $n$ and symbol $m$. In addition, $z_{n,m}$ is AWGN with $z_{n,m} \sim \mtCN(0, \sigmacsq)$. 
Assuming  $\Ktilde$ paths between the ISAC TX and the communications RX, $\hhcom_{n,m}$ can be modeled as
\begin{align}\label{eq_hhcom}
    \hhcom_{n,m} = \sum_{k=0}^{\Ktilde-1} \alphat_k e^{-j 2 \pi n \deltaf \taut_k} e^{j 2 \pi m \Tsym \nut_k} \atx(\thetat_k) \,,
\end{align}
where $\alphat_k$, $\taut_k$, $\nut_k$ and $\thetat_k$ denote, respectively, the complex channel gain, delay, Doppler shift and \ac{AOD} of the $\thn{k}$ path. Here, $k=0$ represents the \ac{LOS} path. Accordinly, the channel gains are given by $\abss{\alphat_0}^2 =  \lambda^2 / (4 \pi \dtilde_0)^2 $ and $\abss{\alphat_k}^2 = \sigmarcskt \lambda^2 / [(4 \pi)^3 \dtilde^2_{k,1} \dtilde^2_{k,2}] $ for $k>0$  \cite[Eq.~(45)]{Zohair_5G_errorBounds_TWC_2018}, where $\sigmarcskt$ denotes the RCS of the scatterer associated with the $\thn{k}$ path, $\dtilde_{k,1}$ and $\dtilde_{k,2}$ are the distances between TX-scatterer and scatterer-RX.

\vspace{-3mm}
\subsection{Beamformers for Sensing and Communications}\label{sec_bf_senscom}
To search for potential targets in the environment, the TX sensing beam $\ffms$ in \eqref{eq_ffm} and \eqref{eq_tst} sweeps an angular range $[-\thetamax, \thetamax]$. We assume the use of $B$ different sensing beams over $M$ symbols. Let $\mM_{b}$ denote the set of indices of symbols for which the $\thn{b}$ beam is employed, i.e., $\mM_1 = \{1, \ldots, M/B\}, \, \ldots \,, \mM_{B} = \{M - M/B+1, \ldots, M\}$. Hence, the sensing beams are given by $\ffms = \sqrt{\Pt/\Ntx} \atx^\conj(\theta_b)$ for $m \in \mM_{b}$, where $\theta_b = -\thetamax + 2(b-1)/(B-1) \thetamax $.  

For communications, we assume a \ac{LOS} beam tracking scenario \cite{sensingAssistedComm_TWC_2020} where the TX communication beam $\ffmc$ in \eqref{eq_ffm} and \eqref{eq_tst} is aligned with $\thetat_0$ and stays constant during the entire frame, i.e., $\ffmc = \sqrt{\Pt/\Ntx} \atx^\conj(\thetat_0) \, \forall m$.

\vspace{-3mm}
\subsection{Impact of Non-Constant-Modulus Data on Sensing}
To demonstrate the impact of different modulation orders on sensing performance, we provide an illustrative example of range profiles in Fig.~\ref{fig_range_profile_RF_QPSK_QAM}, obtained with QPSK and $1024-$QAM modulations. The range profiles are generated by applying reciprocal filtering and 2-D FFT based standard OFDM radar processing (e.g., \cite{RadCom_Proc_IEEE_2011,OFDM_Radar_Phd_2014,OFDM_Radar_Corr_TAES_2020,5G_NR_JRC_analysis_JSAC_2022}). From the figure, it is evident that while high-order modulations provide high data rates, they also tend to result in elevated side-lobe levels, which significantly impairs the detection of weak targets. This example sheds light on one of the fundamental trade-offs in ISAC systems, namely the time-frequency trade-off, from the perspective of detection.

\begin{figure}
	\centering
    \vspace{-0.1in}
	\includegraphics[width=0.95\linewidth]{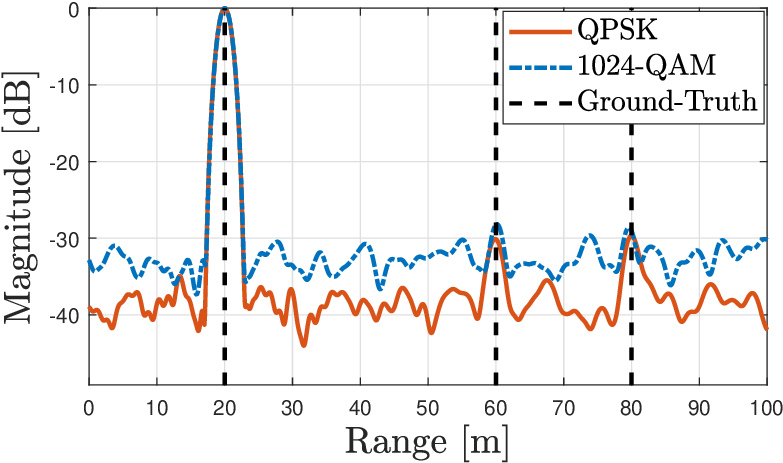}
	\vspace{-0.1in}
	\caption{Range profiles obtained via conventional OFDM radar processing based on reciprocal filtering (i.e., element-wise division of transmit symbols) and 2-D FFT \cite{RadCom_Proc_IEEE_2011,OFDM_Radar_Phd_2014,OFDM_Radar_Corr_TAES_2020,5G_NR_JRC_analysis_JSAC_2022} with QPSK and $1024-$QAM modulations. The parameters employed are $B = 100 \, \rm{MHz}$, $M = 140$, with additional details as specified in Table~\ref{tab_parameters}. The scenario involves three targets, each moving at $10 \, \rm{m/s}$, with RCSs of $(0, -10, -5) \, \rm{dBsm}$ and located at ranges of $(20, 60, 80) \, \rm{m}$ and angles of $(10^\circ, 5^\circ, 15^\circ)$. It is observed that the use of communication data symbols from high-order modulations increases side-lobe levels, thereby masking the presence of weaker targets.} 
	\label{fig_range_profile_RF_QPSK_QAM}
\end{figure}

\vspace{-3mm}
\subsection{Problem Description}
Given the OFDM frame $\boldX \in \complexset{N}{M}$ with $[\boldX]_{n,m} = \xnm$ consisting of random data (and, potentially, pilot symbols), the ISAC TX beamformers $\{\ff_m\}_{m=0}^{M-1}$, and the sensing and communications models in \eqref{eq_ynm} and \eqref{eq_ynmcom}, the problems of interest in this work are: \textit{(i)} to develop a radar sensing algorithm to detect the presence of multiple targets and estimate their delay-Doppler-angle parameters from 3-D tensor observations $\{\yy_{n,m}\}$ in \eqref{eq_ynm} over $\Nrx$ receive antennas, $N$ subcarriers and $M$ symbols, \textit{(ii)} to evaluate the data rate for the signal model in \eqref{eq_ynmcom} across diverse modulation orders (e.g., QAM, QPSK), and \textit{(iii)} to investigate ISAC trade-offs for different configurations of $\boldX$ and different choices of $\ff_m$ (determined by the ISAC weight in \eqref{eq_ffm}, the data/pilot ratio in \eqref{eq_tst} and the modulation order in both strategies).
We begin by tackling the first problem in Sec.~\ref{sec_sens}. The method to approximate the communication MI for data rate evaluation will be introduced in Sec.~\ref{sec_comm_rate}. Finally, we present simulation and experimental results to explore sensing and ISAC trade-off performances in Sec.~\ref{sec_num}.

\section{Radar Sensing Algorithm}\label{sec_sens}
In this section, we introduce a novel sensing algorithm for multiple target detection and accompanying parameter estimation using the observation in \eqref{eq_ynm}, that can account for arbitrary $\boldX$ and TX beam sweeping.

\vspace{-3mm}
\subsection{Beam-Specific Channel Estimation}
Target detection from the observation \eqref{eq_ynm} via \textit{coherent} processing in the spatial-frequency-time domains involves a computationally demanding 3-D search over the delay-Doppler-angle tuples of potential targets \cite{MIMO_OFDM_radar_TAES_2020,MIMO_OTFS_ISAC_2023}. Furthermore, the beam sweeping procedure used for sensing, as discussed in Sec.~\ref{sec_bf_senscom}, results in varying gains for targets across different beams. This is because the overall gain of the $\thn{k}$ target in \eqref{eq_sens_channel} is represented by $\alpha_k \atx^T(\theta_k) \ff_m$, which varies over symbols $m$ when $\ff_m$ changes \cite{5G_NR_JRC_analysis_JSAC_2022}.

Given these two challenges in detection/estimation from \eqref{eq_ynm}, we propose to formulate the problem of sensing as a series of \textit{beam-specific channel estimation} problems where for each beam the time-frequency domain radar channel is estimated per RX element by treating it as an \textit{unstructured channel} rather than a delay-Doppler parameterized one as in \eqref{eq_sens_channel}, followed by \textit{noncoherent} integration \cite{MIMO_OFDM_radar_TAES_2020,MIMO_OTFS_ISAC_2023} of the resulting delay-Doppler images over the RX array. More formally, for the $\thn{b}$ beam, the sensing observations in \eqref{eq_ynm} at the $\thn{i}$ RX element can be expressed using \eqref{eq_sens_channel} as
\begin{align} \label{eq_y_comp_beam}
    \boldY_{i,b} = \boldH_{i,b} \odot \boldX_b  + \boldN_{i,b} \in \complexset{N}{M_b} ~, 
\end{align}
where $\odot$ denotes the Hadamard (element-wise) product,
\begin{align} \label{eq_hib}
    \boldH_{i,b} &\triangleq \sum_{k=0}^{K-1} \alpha_{b,k} \bb(\tau_k) \cc_b^H(\nu_k) [\arx(\theta_k)]_i \in \complexset{N}{M_b} 
\end{align}
represents the time-frequency radar channel at the $\thn{i}$ RX element for the $\thn{b}$ beam, $\boldX_b = [\boldX]_{:,\mM_b} \in \complexset{N}{M_b}$ is the transmit symbols for the $\thn{b}$ beam, $\vecc{\boldN_{i,b}} \sim \mtCN(\boldzero, \sigma^2  \Imatrix)$ and $M_b = \lvert \mM_b \rvert$. In \eqref{eq_hib}, $\alpha_{b,k} = \alpha_k \atx^T(\theta_k) \fft_b$ denotes the overall gain of the $\thn{k}$ target for the $\thn{b}$ beam,
\begin{align} \label{eq_steer_delay}
	\bb(\tau) & \triangleq  \transpose{ \big[ 1 ~ e^{-j 2 \pi \deltaf \tau} ~ \ldots ~  e^{-j 2 \pi (N-1) \deltaf  \tau} \big] } ~, \\ \label{eq_steer_doppler}
	\cc(\nu) & \triangleq \transpose{ \big[ 1 ~ e^{-j 2 \pi  \Tsym \nu } ~ \ldots ~  e^{-j 2 \pi  (M-1) \Tsym \nu } \big] } ~, 
\end{align}
represent the frequency-domain and temporal (slow-time) steering vectors, respectively, and $\cc_b(\nu) = [\cc(\nu)]_{\mM_b}$. Moreover, $\fft_b \in \complexset{\Ntx}{1}$ represents the $\thn{b}$ beam, i.e., $[\FF]_{:,\mM_b} = [\fft_b \, \ldots \, \fft_b ]  \in \complexset{\Ntx}{M_b} $, with $\FF \triangleq [\ff_1 \, \ldots \, \ff_m] \in \complexset{\Ntx}{M}$ denoting the entire TX beamforming matrix.

We revisit the two commonly employed methods to estimate $\boldH_{i,b}$ from $\boldY_{i,b}$ in \eqref{eq_y_comp_beam} and present the proposed method.
\subsubsection{Reciprocal Filtering (RF)}
The reciprocal filtering (RF) performs channel estimation by element-wise division of received symbols by transmit symbols \cite{RadCom_Proc_IEEE_2011,OFDM_Radar_Phd_2014,OFDM_Radar_Corr_TAES_2020,5G_NR_JRC_analysis_JSAC_2022,PRS_ISAC_5G_TVT_2022}
 \begin{align}\label{eq_rf}
        \boldHhat_{i,b}= \boldY_{i,b} \oslash \boldX_b  \,,
\end{align}
where $\oslash$ denotes element-wise division. The RF estimator in \eqref{eq_rf} can be derived as a result of the least-squares (LS) solution in \eqref{eq_y_comp_beam} and thus corresponds to the \textit{zero-forcing} 
\cite{ce_ofdm_95}.  
\subsubsection{Matched Filtering (MF)}
The matched filtering (MF) approach aims to maximize the SNR at the output of the filter and applies conjugate multiplication of received symbols by transmit symbols \cite{OFDM_Radar_Corr_TAES_2020,reciprocalFilter_OFDM_2023}
\begin{align} \label{eq_mf}
        \boldHhat_{i,b} = \boldY_{i,b} \odot \boldX_b^\conj   \,.
\end{align}

\subsubsection{Proposed LMMSE Estimator}
We propose to employ an LMMSE estimator to estimate $\boldH_{i,b}$ in \eqref{eq_y_comp_beam} by treating it as a random unknown parameter as opposed to deterministic modeling in RF and MF strategies. To this end, let us assume that the vectorized version of the channel in \eqref{eq_hib}, given by
\begin{align} \label{eq_hhib_vec}
    \hh_{i,b} \triangleq \vecc{\boldH_{i,b}} = \sum_{k=0}^{K-1} \alpha_{b,k} \cc_b^\conj(\nu_k) \otimes \bb(\tau_k)  [\arx(\theta_k)]_i \,,
\end{align}
has the following first and second moments:
\begin{align} \label{eq_hhib_moments}
    \hhbar_{i,b} \triangleq \Eee\{\hh_{i,b}\}, ~ \CC_{i,b} \triangleq \Eee\{(\hh_{i,b} -\hhbar_{i,b})(\hh_{i,b} -\hhbar_{i,b})^H\} ~.
\end{align}
Based on \eqref{eq_hhib_vec}, the vectorized observations in \eqref{eq_y_comp_beam} can be written as 
\begin{align} \label{eq_y_comp_vec}
    \yy_{i,b} = \boldD_{i,b} \hh_{i,b} + \nn_{i,b} \,,
\end{align}
where $\yy_{i,b} \triangleq \vecc{\boldY_{i,b}}$, $\boldD_{i,b} \triangleq \diag{\vecc{\boldX_b}}$ and $\nn_{i,b} \triangleq \vecc{\boldN_{i,b}}$. Given the statistics in \eqref{eq_hhib_moments}, the LMMSE estimate of the radar channel $\hh_{i,b}$ in \eqref{eq_y_comp_vec} is given by \cite[p.~389]{kay1993fundamentals}
\begin{align} \nonumber
    \hhhat_{i,b} &= \CC_{i,b} \boldD_{i,b}^H (\boldD_{i,b} \CC_{i,b} \boldD_{i,b}^H + \sigma^2 \Imatrix)^{-1} 
    \\ \label{eq_hhhat} & ~~~~~~ \times (\yy_{i,b} - \boldD_{i,b} \hhbar_{i,b}) + \hhbar_{i,b} \,.
\end{align}
The following lemma helps simplification of \eqref{eq_hhhat}.
\begin{lemma}[OFDM Radar Channel Statistics]\label{lemma_stat}
    Suppose that the target parameters in the radar sensing channel \eqref{eq_hhib_vec} are distributed independently as $\alpha_{b,k} \sim \mtCN(0, \sigmaakb^2), \tau_k \sim \mathcal{U}[0, 1/\deltaf], \nu_k \sim \mathcal{U}[0, 1/(\fc \Tsym)], \theta_k \sim \mathcal{U}[-\pi/2, \pi/2]  $ (i.e., delays, Dopplers and angles are drawn uniformly from their respective unambiguous detection intervals). In addition, different targets are assumed to have independent distributions.\footnote{In each OFDM frame, we sample a realization from these distributions, which we assume to stay constant during the frame.} Then, the statistics of $\hh_{i,b}$ in \eqref{eq_hhib_moments} are given by $ \hhbar_{i,b} = \boldzero $ and $\CC_{i,b} = \sigmaaab^2\Imatrix$, 
    where $\sigmaaab^2 \triangleq \sum_{k=0}^{K-1} \sigmaakb^2 $.
\end{lemma}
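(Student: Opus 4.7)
The plan is to evaluate the first and second moments of $\hh_{i,b}$ directly from the expression in \eqref{eq_hhib_vec}, leveraging the mutual independence of $\{\alpha_{b,k}, \tau_k, \nu_k, \theta_k\}$ across $k$ and within each $k$. For the mean, since $\alpha_{b,k}$ is zero-mean complex Gaussian and independent of the delay, Doppler, and angle parameters, linearity of expectation immediately yields $\hhbar_{i,b} = \sum_{k} \Eee\{\alpha_{b,k}\} \Eee\{\cc_b^\conj(\nu_k) \otimes \bb(\tau_k) [\arx(\theta_k)]_i\} = \boldzero$. This first part is essentially a one-liner.

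For the covariance, I would start from $\CC_{i,b} = \Eee\{\hh_{i,b} \hh_{i,b}^H\}$ (mean is zero) and expand the outer product as a double sum over target indices $k$ and $k'$. The cross terms $k \neq k'$ vanish because targets are independent and each $\alpha_{b,k}$ has zero mean, so $\Eee\{\alpha_{b,k} \alpha_{b,k'}^\conj\} = 0$. The surviving diagonal terms reduce, using independence of $\alpha_{b,k}$ from the geometry, to
\begin{align*}
\CC_{i,b} = \sum_{k=0}^{K-1} \sigmaakb^2 \, \Eee\big\{ \abs{[\arx(\theta_k)]_i}^2 \big\} \, \Eee\big\{ \cc_b^\conj(\nu_k) \cc_b^T(\nu_k) \big\} \otimes \Eee\big\{ \bb(\tau_k) \bb^H(\tau_k) \big\},
\end{align*}
where I have used the Kronecker identity $(A \otimes B)(C \otimes D) = (AC) \otimes (BD)$ together with $(\cc_b^\conj(\nu) \otimes \bb(\tau))^H = \cc_b^T(\nu) \otimes \bb^H(\tau)$, and the assumed independence of $\nu_k$ and $\tau_k$ to split the joint expectation.

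It then remains to evaluate three elementary expectations. Since the ULA entries $[\arx(\theta)]_i$ have unit modulus, $\Eee\{\abs{[\arx(\theta_k)]_i}^2\} = 1$ regardless of the distribution of $\theta_k$. The $(p,q)$ entry of $\Eee\{\bb(\tau)\bb^H(\tau)\}$ is $\Eee\{e^{-j 2\pi (p-q)\deltaf \tau}\}$; since $\tau \sim \mathcal{U}[0, 1/\deltaf]$, for any integer $p - q \neq 0$ the integral of the complex exponential over a full period is zero, while for $p = q$ the integrand is unity, giving $\Eee\{\bb(\tau)\bb^H(\tau)\} = \Imatrix_N$. The same argument applies to $\Eee\{\cc_b^\conj(\nu)\cc_b^T(\nu)\}$ on its unambiguous Doppler interval; entries of $\cc_b$ are indexed by $m \in \mM_b$ which are integers, so pairwise differences $m_p - m_q$ are nonzero integers off-diagonal, yielding the identity matrix of size $M_b$. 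Combining, $\CC_{i,b} = \sum_k \sigmaakb^2 (\Imatrix_{M_b} \otimes \Imatrix_N) = \sigmaaab^2 \Imatrix$.

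The main subtlety, rather than an obstacle, is the careful handling of the Kronecker-structured covariance together with the restriction of the slow-time steering vector to the subset $\mM_b$; once the $(p,q)$-entry calculation is written out, everything collapses because the uniform distributions are explicitly stipulated over the unambiguous intervals so that every off-diagonal exponential integrates to zero. No additional technical tools are required beyond Fubini, independence, and the algebra of Kronecker products.
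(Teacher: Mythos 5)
Your proposal is correct and follows essentially the same route as the paper's proof: zero mean by linearity and independence, a double sum over target pairs whose cross terms vanish, the Kronecker mixed-product identity to factor the covariance, and evaluation of the off-diagonal exponential integrals over the full unambiguous intervals to obtain identity matrices. The only cosmetic difference is that you make the unit-modulus step $\Eee\{\abs{[\arx(\theta_k)]_i}^2\}=1$ explicit, which the paper leaves implicit.
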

\begin{proof}
    See Appendix~\ref{app_lemma_stat}. 
\end{proof}
Based on Lemma~\ref{lemma_stat}, \eqref{eq_hhhat} becomes 
\begin{align} \label{eq_hhhat2}
   \hhhat_{i,b} &= \sigmaaab^2 \boldD_{i,b}^H (\sigmaaab^2 \boldD_{i,b} \boldD_{i,b}^H + \sigma^2  \Imatrix)^{-1} \yy_{i,b}  \,.
\end{align}
Folding \eqref{eq_hhhat2} back into matrix and plugging the definition of $\boldD_{i,b}$, the LMMSE channel estimate is given by
\begin{align} \label{eq_lmmse_b}
        \boldHhat_{i,b} = \frac{ \boldY_{i,b} \odot \boldX_b^\conj }{ \abs{\boldX_b}^2 + \snr_b^{-1}}  \,,
\end{align}
 where $\snr_b \triangleq \sigmaaab^2/\sigma^2$, and the $\abss{\cdot}^2$ operation is performed element-wise.

\subsubsection{Interpretation of LMMSE Estimator}\label{sec_int_lmmse}
A comparative analysis of \eqref{eq_lmmse_b} with \eqref{eq_rf} and \eqref{eq_mf} yields the following insights:
\begin{itemize}
    \item \textit{At high SNRs}, i.e., as $\snr_b \to \infty$, we have $\boldHhat_{i,b} \approx \boldY_{i,b} \oslash \boldX_b$. Thus, the LMMSE estimator converges to the RF estimator in \eqref{eq_rf}.
    \item \textit{At low SNRs}, i.e., as $\snr_b \to 0$, we obtain $\boldHhat_{i,b} \approx  \boldY_{i,b} \odot \boldX_b^\conj $. This suggests that the LMMSE estimator converges to the MF estimator in \eqref{eq_mf}. 
\end{itemize}
Hence, the LMMSE estimator represents a generalization of the RF and MF receivers to the entire SNR range. When using unit-amplitude modulations, where $\abss{\boldX_b}$ is an all-ones matrix, all estimators become equivalent, regardless of the SNR level. 

\subsubsection{Illustrative Example}
\begin{figure}
        \begin{center}
        \subfigure[]{
			 \label{fig_rangeProfile_twoTargets_lowSNR}
			 \includegraphics[width=0.9\columnwidth]{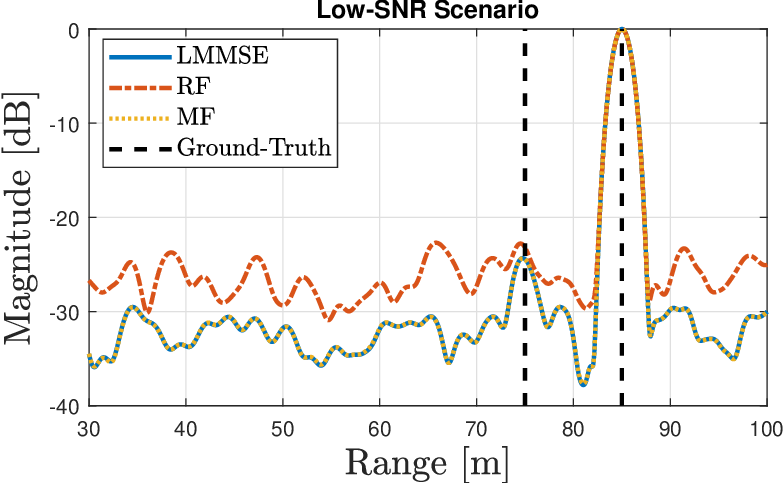}  
		}
        \subfigure[]{
			 \label{fig_rangeProfile_twoTargets_highSNR}
			 \includegraphics[width=0.9\columnwidth]{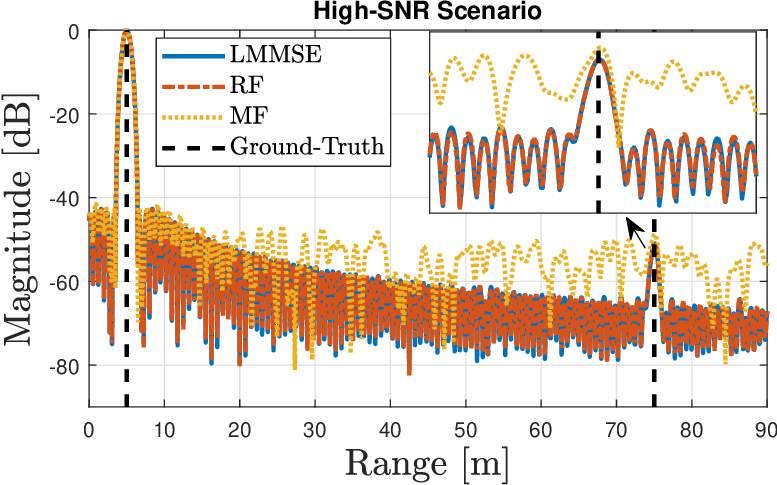}
		}		
		\end{center}
		\vspace{-0.1in}
        \caption{Comparative analysis of range profiles obtained by different channel estimation strategies with $1024-$QAM modulation. \subref{fig_rangeProfile_twoTargets_lowSNR} Low-SNR scenario with the OFDM parameters in Table~\ref{tab_parameters} except $B = 100 \, \rm{MHz}$ and $M = 140$. The scenario involves two targets, each moving at $15 \, \rm{m/s}$, with RCSs of $(-7, 20) \, \rm{dBsm}$ and located at ranges of $(75, 85) \, \rm{m}$ and angles of $(10^\circ, 10^\circ)$. \subref{fig_rangeProfile_twoTargets_highSNR} High-SNR scenario with the OFDM parameters in Table~\ref{tab_parameters} except $B = 200 \, \rm{MHz}$ and $M = 140$. The scenario involves two targets, each moving at $15 \, \rm{m/s}$, with RCSs of $(15, 20) \, \rm{dBsm}$ and located at ranges of $(75, 5) \, \rm{m}$ and angles of $(10^\circ, 10^\circ)$. As opposed to the RF and MF estimators, which fail, respectively, at low and high SNRs, the proposed LMMSE approach provides a globally effective solution that achieves low side-lobe levels under high-order QAM modulations across diverse SNR conditions.}  
        \label{fig_rangeProfile_CE_Comp}
\end{figure}

We provide an illustrative example in Fig.~\ref{fig_rangeProfile_CE_Comp} to showcase the behavior of the estimators in \eqref{eq_rf}, \eqref{eq_mf} and \eqref{eq_lmmse_b} with $1024-$QAM modulation. The figure demonstrates the range profiles obtained by 2-D FFT \cite{RadCom_Proc_IEEE_2011,OFDM_Radar_Phd_2014,OFDM_Radar_Corr_TAES_2020,MIMO_OFDM_ICI_JSTSP_2021} on $\boldHhat_{i,b}$ under two different scenarios corresponding to low and high SNR operation.\footnote{\label{fn_snr_mean}Defining $\snrmean \triangleq \sum_{b=1}^{B} \snr_b /B$, the low and high SNR scenarios have $\snrmean = -29.7 \, \rm{dB}$ and $\snrmean = 19.5 \, \rm{dB}$, respectively.} We observe that the results corroborate the interpretations in Sec.~\ref{sec_int_lmmse}, i.e., the LMMSE estimator converges to the MF and the RF estimators at low and high SNR regimes, respectively. As seen from Fig.~\ref{fig_rangeProfile_CE_Comp}, the proposed LMMSE approach emerges as a robust and globally effective solution, maintaining low side-lobe levels under high-order QAM modulations across a range of SNR conditions. In contrast, conventional RF and MF estimators struggle with the detection of weak targets due to increased side-lobe levels at low and high SNRs, respectively. Drawing from these findings, Table~\ref{tab_guideline} offers general guidelines for selecting estimators to achieve robust sensing performance across various SNR levels.


\begin{table}
\caption{Guidelines on Which Estimator to Use In OFDM Radar Sensing with High-Order Modulations Under Different SNR Regimes.}
\centering
    \begin{tabular}{|c|c|c|c|}
        \hline
        & {RF} & {MF} & {LMMSE} \\ \hline
        {Low SNR}  & \ding{55} & \ding{51} & \ding{51} \\ \hline
        {High SNR} & \ding{51} & \ding{55} & \ding{51} \\ \hline
    \end{tabular}
    \label{tab_guideline}
  \vspace{0.1in}
\end{table}

\vspace{-4mm}
\subsection{Beam-Specific Delay-Doppler Estimation}\label{sec_dd_est}
We are now interested in estimating target delays and Dopplers from the output of channel estimation $\boldHhat_{i,b}$ in \eqref{eq_rf}, \eqref{eq_mf} or \eqref{eq_lmmse_b}. The channel estimates will have the form
\begin{align} \label{eq_hhat_ib}
    \boldHhat_{i,b} = \boldH_{i,b} + {\boldZ}_{i,b} \in \complexset{N}{M_b} \,,
\end{align}
where $\boldH_{i,b}$ is given in \eqref{eq_hib} and ${\boldZ}_{i,b}$ represents the channel estimation error term. Based on \eqref{eq_hib} and following a generalized likelihood ratio test (GLRT) approach \cite{MIMO_OFDM_ICI_JSTSP_2021}, the delay-Doppler images can be computed simply via 2-D FFT, i.e.,
\begin{align}\label{eq_2d_fft_step3}
     h_{i,b}(\tau, \nu) = \bb^H(\tau) \boldHhat_{i,b} \cc_b(\nu) \,,
\end{align}
where $\bb(\tau)$ in \eqref{eq_steer_delay} and $\cc_b(\nu)$ in \eqref{eq_steer_doppler} coincide with (zero-padded) DFT matrix columns. In matrix form with sampled delay-Doppler grid, \eqref{eq_2d_fft_step3} can be written as
\begin{align} \label{eq_hhatdd}
    \hhatdd_{i,b} =  \FF_N^H \boldHhat_{i,b} \FF_{M_b}  \in \complexset{N}{M_b} \,,
\end{align}
where $\FF_N \in \complexset{N}{N}$ is the unitary DFT matrix of size $N$. Since target angles are a-priori unknown and 3-D delay-Doppler-angle processing is computationally demanding, we propose to non-coherently integrate over the antenna elements \cite{MIMO_OFDM_radar_TAES_2020,MIMO_OTFS_ISAC_2023} to obtain the delay-Doppler image for each beam $b$ (i.e., without seeking phase-alignment with $[\arx(\theta_k)]_i$ in \eqref{eq_hib}):
\begin{align} \label{eq_hhatdd_noncoh}
    \hhatdd_{b} = \sum_{i=1}^{\Nrx}  \abs{\hhatdd_{i,b}}^2 \,.
\end{align}
Multiple target detection and delay-Doppler estimation can be performed by searching for peaks in \eqref{eq_hhatdd_noncoh}, e.g., via a constant false alarm rate (CFAR) detector \cite[Ch.~6.2.4]{richards2005fundamentals}.

\vspace{-3mm}
\subsection{Multi-Target Angle Estimation per Delay-Doppler Cell}
\label{sec_ang_est}
For each delay-Doppler detection $(\tauhat, \nuhat)$ obtained via 2-D CFAR on $\hhatdd_b$ in \eqref{eq_hhatdd_noncoh}, we estimate angles of (possibly) multiple targets residing within the resolution cell of $(\tauhat, \nuhat)$. To this end, we construct the spatial domain compressed observation $\yyhat_b \in \complexset{\Nrx}{1}$ from the original channel estimates in \eqref{eq_hhat_ib} as follows: 
\begin{align}\nonumber
    [\yyhat_b]_i &= \frac{1}{N M_b} \bb^H(\tauhat) \boldHhat_{i,b} \cc_b(\nuhat) \,,
    \\ \nonumber
    &= \sum_{k=0}^{K-1} \alpha_{b,k} \frac{\bb^H(\tauhat) \bb(\tau_k)}{N} \frac{\cc_b^H(\nu_k) \cc_b(\nuhat)}{M_b} [\arx(\theta_k)]_i + [\nn_b]_i \,,
    \\ \label{eq_ybi}
    & 
    \approx  \sum_{k \in \kthh} \alpha_{b,k} [\arx(\theta_k)]_i  + [\nn_b]_i \,,
\end{align}
where $[\nn_b]_i \triangleq \frac{1}{N M_b} \bb^H(\tauhat) {\boldZ}_{i,b} \cc_b(\nuhat) $ is the noise component, $\kthh \subseteq \{1, \ldots, K\}$ contains the indices of the targets located within the resolution cell of $(\tauhat, \nuhat)$ (i.e., $\abss{\bb^H(\tauhat) \bb(\tau_{k_1})} \gg \abss{\bb^H(\tauhat) \bb(\tau_{k_2})}$ and $\abss{\cc_b^H(\nuhat) \cc_b(\nu_{k_1})} \gg \abss{\cc_b^H(\nuhat) \cc_b(\nu_{k_2})}$ for $k_1 \in \kthh, k_2 \notin \kthh$). 
Stacking \eqref{eq_ybi} over the RX array yields
\begin{align} \label{eq_yy_angle}
    \yyhat_b =  \sum_{k \in \kthh} \alpha_{b,k} \arx(\theta_k) + \nn_b \,.
\end{align}

Due to relatively small number of antenna elements $\Nrx$ at typical ISAC sensing receivers, 
we propose to retrieve multiple angles from \eqref{eq_yy_angle} using the ESPRIT algorithm in order to resolve targets closely spaced in the angular domain. In particular, we resort to 1-D ESPRIT using spatial smoothing with Hankel matrix construction \cite{delayTarik_TWC_2022,ge2023v2x}. Due to space limitations, the reader is referred to \cite[Sec.~IV-D]{delayTarik_TWC_2022} and \cite[Sec.~III-A]{ge2023v2x} for details on multiple angle estimation using \eqref{eq_yy_angle}.

\vspace{-3mm}
\subsection{Beam-Specific Gain Estimation}\label{sec_gain_est}
Given the estimates $\mest_b = \{\tauhatbi{b}{k}, \nuhatbi{b}{k}, \thetahatbi{b}{k}\}_{k=1}^{K_b}$ from the beam-specific processing for the $\thn{b}$ beam, we estimate the corresponding target gains $\alphab_b \in \complexset{K_b}{1}$ via least-squares (LS) using the channel estimates $\{\boldHhat_{i,b}\}_{i=1}^{\Nrx}$ in \eqref{eq_hhat_ib}. Based on the channel structure in \eqref{eq_hhat_ib} and \eqref{eq_hib}, the channel estimates over all RX antenna elements can be expressed as
\begin{align} \label{eq_hhb_2}
    \hhhat_b \triangleq \begin{bmatrix}
        \hhhat_{1,b} \\ \vdots \\ \hhhat_{\Nrx,b} 
    \end{bmatrix} = \underbrace{\begin{bmatrix}
        \boldA_{1,b} \\ \vdots \\ \boldA_{\Nrx,b} 
    \end{bmatrix}}_{\triangleq \boldA_b \in \complexset{N M_b \Nrx}{K_b} } \alphab_b + \zz_b ~,
\end{align}
where $\hhhat_{i,b} \triangleq \veccs{\boldHhat_{i,b}} \in \complexset{N M_b}{1}$,  $ \boldA_{i,b} \triangleq [ \aaa_{i,b}^{(1)} ~ \ldots ~ \aaa_{i,b}^{(K_b)} ] \in \complexset{N M_b}{K_b} ~,$, and $ \aaa_{i,b}^{(k)} \triangleq [\arx(\thetahatbi{b}{k})]_i \,\cc_b^\conj(\nuhatbi{b}{k}) \otimes \bb(\tauhatbi{b}{k}) \in \complexset{N M_b}{1} $. 
Using \eqref{eq_hhb_2}, the gain estimates can be obtained as 
\begin{align} \label{eq_alphahatb}
    \alphahatb_b = \boldA_b^{\dagger} \hhhat_b ~. 
\end{align}

\vspace{-4mm}
\subsection{LMMSE Processing with A-Priori Unknown 
Gains}
The LMMSE estimator expression in \eqref{eq_lmmse_b} involves the SNR term $\snr_b$, the evaluation of which requires the knowledge of the target channel gains (i.e., $\sigmaakb^2 = \Eee\{ \lvert \alpha_{b,k} \rvert^2 \} $). Since the channel gains are a-priori unknown, we propose to first run the LMMSE estimator in \eqref{eq_lmmse_b} using three different $\snr_b$ values, following the subsequent processing steps in Sec.~\ref{sec_dd_est}--Sec.~\ref{sec_ang_est}, and merge and cluster the detections obtained via the different $\snr_b$ values (see Lines~\ref{line_lmmse}--\ref{line_merge} of Algorithm~\ref{alg_overall}). Then, we plug the resulting gain estimates from \eqref{eq_alphahatb} into the SNR expression \eqref{eq_lmmse_b}, i.e.,
\begin{align}\label{eq_snr_gain}
    \snr_b = \norm{ \alphahatb_b}^2/\sigma^2 \,,
\end{align}
which can now be inserted into \eqref{eq_lmmse_b} for LMMSE estimation.

\vspace{-4mm}
\subsection{Clustering Detections over All Beams}\label{sec_cluster}
\vspace{-1mm}
Since the same target might be detected in multiple beams during beam sweeping, we need a clustering algorithm to merge detections from all beams. To this end, given the estimates $\{\mest_b\}_{b=1}^{B}$, we resort to the density-based spatial clustering of applications with noise (DBSCAN) algorithm \cite{dbscan,dbscan_ComMag_2018} to cluster detections over $B$ beams. In DBSCAN, we set the minimum number of points in a cluster to $1$ and define the distance measure (used to characterize the $\epsilon$-neighbourhood of a point \cite{dbscan}) in the range-velocity-angle space as the weighted Euclidean distance:
\begin{align}
    d_{\boldW}(\sssh_p, \sssh_q) = [ (\sssh_p - \sssh_q)^T \boldW  (\sssh_p - \sssh_q) ]^{1/2} \,, 
\end{align}
where $\sssh_p = [\Rhat_p, \vhat_p, \thetahat_p]^T$ contains the range, velocity and angle estimates of the $\thn{p}$ detection in $\{\mest_b\}_{b=1}^{B}$, and $\boldW \in \realset{3}{3}$ is a diagonal matrix to account for scaling due to unit differences. The $\epsilon$ parameter \cite{dbscan} is set to $\epsilon = ( \sdelta^T \boldW \sdelta )^{1/2}$, where $\sdelta = [\Delta R, \Delta v, \Delta \theta ]^T$ with $\Delta R = c/(2N\deltaf)$ denoting the range resolution, $\Delta v = \lambda B / (2 M \Tsym)$ the velocity resolution per beam and\footnote{Since the ESPRIT algorithm in Sec.~\ref{sec_ang_est} offers higher angular resolution than the standard Rayleigh resolution of $\arx(\theta)$ in \eqref{eq_ar_steer}, we set the proximity criterion in angle to a lower value than the standard resolution.} $\Delta \theta = 2^\circ$.

\vspace{-3mm}
\subsection{Summary of the Proposed Algorithm}
The proposed LMMSE based sensing algorithm is summarized in Algorithm~\ref{alg_overall}, which uses Algorithm~\ref{alg_beam_spec} as a subroutine.

\begin{algorithm}[t]
	\caption{Beam-Specific Parameter Estimation from Unstructured Radar Channel Estimates}
	\label{alg_beam_spec}
	\begin{algorithmic}[1]
		\State \textbf{Input:} Frequency/slow-time radar channel estimates $\{\boldHhat_{i,b} \}_{i=1}^{\Nrx}$ for the $\thn{b}$ beam in \eqref{eq_rf}, \eqref{eq_mf} or \eqref{eq_lmmse_b}, probability of false alarm $\pfa$.
		\State \textbf{Output:} Delay, Doppler, angle and gain estimates $\{\tauhatbi{b}{k}, \nuhatbi{b}{k}, \thetahatbi{b}{k}, \alphahatbi{b}{k} \}_{k=1}^{K_b}$ of multiple targets.
        \State \parbox[t]{230pt}{Obtain the noncoherently integrated delay-Doppler image $\hhatdd_{b}$ for the $\thn{b}$ beam via \eqref{eq_hhatdd} and \eqref{eq_hhatdd_noncoh}.\strut}
        \State \parbox[t]{230pt}{Run a CFAR detector on $\hhatdd_{b}$ with the specified $\pfa$ for target detection in the delay-Doppler domain.\strut}
        \State  \parbox[t]{230pt}{For each delay-Doppler detection $(\tauhat, \nuhat)$, compute the spatial domain observation $\yyhat_b$ in \eqref{eq_ybi}.\strut}
        \State Estimate angles from $\yyhat_b$ via 1-D ESPRIT.
        \State Using the estimates $\{\boldHhat_{i,b}\}_{i=1}^{\Nrx}$ and $\{\tauhatbi{b}{k}, \nuhatbi{b}{k}, \thetahatbi{b}{k}\}_{k=1}^{K_b}$, estimate the gains via \eqref{eq_hhb_2}--\eqref{eq_alphahatb}.
	\end{algorithmic} 
	\normalsize
\end{algorithm}

\begin{algorithm}[t]
	\caption{LMMSE-Based MIMO-OFDM Radar Sensing}
	\label{alg_overall}
	\begin{algorithmic}[1]
		\State \textbf{Input:} Space/frequency/slow-time MIMO-OFDM radar data cube $\{ \yy_{n,m} \}$ in \eqref{eq_ynm}, transmit symbols $\boldX$, number of distinct beams $B$, probability of false alarm $\pfa$.
		\State \textbf{Output:} Delay-Doppler-angle estimates $\{\tauhat_k, \nuhat_k, \thetahat_k\}_{k=0}^{K-1}$ of multiple targets.
		\State \textbf{for} $b = 1, \ldots, B$
        \Indent
        \State \parbox[t]{210pt}{Compute the RF, MF and LMMSE channel estimate $\boldHhat_{i,b}$ in \eqref{eq_rf}, \eqref{eq_mf} and \eqref{eq_lmmse_b}, respectively, for each RX element $i = 1, \ldots, \Nrx$ using $\snr_b = 1$ in \eqref{eq_lmmse_b}.\strut} \label{line_lmmse}
        \State \parbox[t]{210pt}{For each estimate, run \textbf{Algorithm~\ref{alg_beam_spec}} to obtain the delay, Doppler and angle detections $\mest_b = \{\tauhatbi{b}{k}, \nuhatbi{b}{k}, \thetahatbi{b}{k}\}_{k=1}^{K_b}$.\strut} \label{line_gain}   
        \vspace{0.05in}
        \State \parbox[t]{210pt}{Merge the detections from the three estimators and cluster them via DBSCAN in Sec.~\ref{sec_cluster}.}\label{line_merge}
        \vspace{0.05in}
         \State \parbox[t]{210pt}{Compute the resulting $\snr_b$ via \eqref{eq_alphahatb} and \eqref{eq_snr_gain}.}\label{line_snr_b}
         \State \parbox[t]{210pt}{Compute the LMMSE channel estimate in \eqref{eq_lmmse_b} by inserting $\snr_b$ and run \textbf{Algorithm~\ref{alg_beam_spec}}.\strut}\label{line_lmmse_final} 
        \EndIndent
        \State \textbf{end for} 
        \State Perform DBSCAN clustering of the resulting detections $\{\mest_b\}_{b=1}^{B}$ using the settings in Sec.~\ref{sec_cluster}.
	\end{algorithmic} 
	\normalsize
\end{algorithm}

\vspace{-2mm}
\section{Communications Data Rate Evaluation}\label{sec_comm_rate}
In this section, we provide a methodology to evaluate the data rate of the communications subsystem of the considered ISAC system under different constellations, 
similar to \cite{Liu_Reshaping_OFDM_ISAC_2023}.
\vspace{-3mm}
\subsection{Calculation of Mutual Information}\label{sec_calc_MI}
The received signal at the communications RX in \eqref{eq_ynmcom} can be recast as $\ycom_{n,m} = \hnm \xnm + z_{n,m}$, 
where $\hnm \triangleq (\hhcom_{n,m})^T \ff_m$.
As we accumulate the values of $\ycom_{n,m}$ over the entire OFDM frame, we obtain
\begin{align} \label{eq_Ycom}
\boldYcom = \mathbf{H}\odot\boldX+\boldZ \,,
\end{align}
where $\boldYcom\in\mathbb{C}^{N\times M}$ with $[\boldYcom]_{n,m}=\ycom_{n,m}$, the effective channel matrix $\boldH\in\mathbb{C}^{N\times M}$ with $[\mathbf{H}]_{n,m}=\hnm$, and $\boldZ\in\mathbb{C}^{N\times M}$ is a noise matrix with \ac{i.i.d.} entries $z_{n,m} \sim \mtCN(0, \sigmacsq )$. Hence, assuming perfect knowledge of the channel at the RX, the \ac{MI} of the entire frame can be written as
\begin{align} \label{eq_mi_ycom}
I(\mathbf{X};\boldYcom|\mathbf{H})=\sum_{n,m} I(\xnm;\ycom_{n,m}\,|\,\hnm) \,,
\end{align}
where the \ac{MI} term is given by
\begin{align} \label{eq_mi}
    I(\xnm;\ycom_{n,m}\,|\,\hnm)&=h(\ycom_{n,m}\,|\,\hnm)-h(\ycom_{n,m}\,|\,\hnm,\xnm) \,,
\end{align}
with $h(\cdot)$ representing the (conditional) entropy of a random variable. The separability of the \ac{MI} in \eqref{eq_mi_ycom} results from the independence of $\ycom_{n,m}$ across subcarriers and symbols.
\vspace{-3mm}
\subsection{Evaluation of Entropy}\label{sec_calc_entr}
The entropy on the right-hand side of \eqref{eq_mi} is given by \cite{thomas2006elements}
\begin{align} \label{eq_entropy_gauss}
    & h(\ycom_{n,m}\,|\,\hnm,\xnm) 
    \\ \nonumber &=  \int_{\mathbb{C}}-f(\ycom_{n,m}\,|\,\hnm,\xnm)\log f(\ycom_{n,m}\,|\,\hnm,\xnm)\, \d \ycom_{n,m}  \,,
    \\ \nonumber &= \log\left(\pi e \sigmacsq\right) \,,
\end{align}
which is due to $f\!\left(\ycom_{n,m}\,|\,\hnm,\xnm\right)=\mtCN(\hnm\xnm,\sigmacsq)$. We now calculate $h(\ycom_{n,m}\,|\,\hnm)$ in \eqref{eq_mi}  via
\begin{align} \nonumber
    h(\ycom_{n,m}\,|\,\hnm)   
    &=\mathbb{E}\left[-\log f(\ycom_{n,m}\,|\,\hnm)\right] \,,
    \\ 
    &\approx -\frac{1}{\ns}\sum_{i=1}^{\ns} \log f(y_i|\,\hnm) \,, \label{eqn_approx_LLN}
\end{align}
where the approximation leverages the law of large numbers and becomes exact as $\ns$ increases. Assuming that $\xnm$ is uniformly distributed over a finite alphabet $\mathcal{X}=\{x_1,x_2,\ldots,x_L\}$, \eqref{eqn_approx_LLN} can be evaluated via 
\begin{align}
\label{eq_output_prob}
    f(\ycom_{n,m}\,|\,\hnm)&=\frac{1}{L}\sum_{\ell=1}^{L}f(\ycom_{n,m}\,|\,\hnm,x_\ell)\\ \nonumber
    &=\frac{1}{L}\sum_{\ell=1}^{L} \frac{1}{\pi\sigmacsq}\exp\Big(-\frac{|\ycom_{n,m}-\hnm x_\ell|^2}{\sigmacsq}\Big).
\end{align}
\vspace{-3mm}
\subsection{General Procedure to Evaluate MI in \eqref{eq_mi_ycom}}
Drawing from Sec.~\ref{sec_calc_MI} and Sec.~\ref{sec_calc_entr}, we outline the MI evaluation process from \eqref{eq_mi_ycom} in Algorithm~\ref{alg_rate}.

\begin{algorithm}[t]
	\caption{Rate Evaluation under Different Constellations}
	\label{alg_rate}
	\begin{algorithmic}[1]
		\State \textbf{Input:} Frequency/slow-time communication channel $\boldH $ in \eqref{eq_Ycom}, alphabet $\Xcal=\{x_1,x_2,\ldots,x_L\}$, number of samples for entropy approximation $\ns$.
		\State \textbf{Output:} MI in \eqref{eq_mi_ycom}. 
        \State \textbf{for} $n = 0, \ldots, N-1$, $m = 0, \ldots, M-1$
        \Indent
        \State \textbf{for} $i = 1, \ldots, \ns$
        \Indent
        \State \parbox[t]{200pt}{Select a symbol $x_i$ from the alphabet $\Xcal$  in a uniformly random manner.\strut} 
        \State \parbox[t]{200pt}{Generate a realization of $n_{n,m} \sim \mtCN(0,\sigmacsq)$.\strut}
        \State Calculate the output via $y_i=\hnm x_i+n_{n,m}$.
        \State Compute $f(y_i|\,\hnm)$ via \eqref{eq_output_prob}.
        \EndIndent
        \State \textbf{end for} 
        \State Compute $ h(\ycom_{n,m}\,|\,\hnm) $ via \eqref{eqn_approx_LLN}. \label{line_hycom}
        \State Compute the MI in \eqref{eq_mi} via \eqref{eq_entropy_gauss} and \eqref{eqn_approx_LLN}. \label{line_mi}
        \EndIndent
        \State \textbf{end for}
        \State Compute the MI of the entire frame in \eqref{eq_mi_ycom} via \eqref{eq_mi}.
	\end{algorithmic} 
	\normalsize
\end{algorithm}

\section{Simulation and Experimental Results}\label{sec_num}
In this section, we assess the performance of the proposed sensing algorithm in Algorithm~\ref{alg_overall} as well as the accompanying ISAC trade-offs using the data rate evaluation procedure in Algorithm~\ref{alg_rate} on both simulated and experimentally obtained data. The default simulation parameters are provided in Table~\ref{tab_parameters}. For sensing beam sweeping in Sec.~\ref{sec_bf_senscom}, we set the number of beams $B = \Ntx$ and $\thetamax = 70^\circ$. To evaluate the detection performance, we set $\pfa = 10^{-4}$ in Algorithm~\ref{alg_overall} and consider $100$ independent Monte Carlo noise realizations in \eqref{eq_ynm}, where each realization corresponds to a 3-D noise tensor of size $\Nrx N M$. For benchmarking purposes, we compare the sensing and ISAC performance of the following algorithms:
\begin{itemize}
    \item \textbf{LMMSE}: The proposed MIMO-OFDM radar sensing algorithm in Algorithm~\ref{alg_overall}. 
    \item \textbf{LMMSE (ideal)}: The genie-aided version of Algorithm~\ref{alg_overall} where the true value of $\snr_b$ is inserted into \eqref{eq_lmmse_b} on Line~\ref{line_lmmse_final} by skipping Lines~\ref{line_lmmse}--\ref{line_snr_b}. This serves as an upper bound on the performance of LMMSE. 
    \item \textbf{RF}: The standard RF-based sensing algorithm that executes Algorithm~\ref{alg_overall} by skipping Lines~\ref{line_merge}--\ref{line_lmmse_final}.
    \item \textbf{MF}: The standard MF-based sensing algorithm that executes Algorithm~\ref{alg_overall} by skipping Lines~\ref{line_merge}--\ref{line_lmmse_final}.
    \item \textbf{RF-SOTA}: The state-of-the-art RF-based MIMO-OFDM radar sensing algorithm in \cite{5G_NR_JRC_analysis_JSAC_2022}.
\end{itemize}

For the communication channel in \eqref{eq_hhcom}, we consider the presence of $\Ktilde = 4$ paths to an RX located at $[43, -25]^T \, \rm{m}$ with respect to the ISAC TX, including a \ac{LOS} path and $3$ \ac{NLOS} paths. The corresponding scatterers are located at $[40, -20]^T \, \rm{m}$, $[42, -27]^T \, \rm{m}$ and $[38, -30]^T \, \rm{m}$, with the RCSs $\{-5, -10, -10\} \, \rm{dBsm}$, respectively. Considering stationary RX and scatterers, we set $\nut_k = 0$ for all the paths.

In the following subsections, we first explore the time-frequency and spatial domain ISAC trade-offs under concurrent transmission. Then, we compare concurrent and time-sharing strategies as described in Sec.~\ref{sec_isac_str}. Finally, we provide experimental results using real-world OFDM monostatic sensing measurements to verify the behaviour of the considered algorithms as illustrated in Fig.~\ref{fig_range_profile_RF_QPSK_QAM} and Fig.~\ref{fig_rangeProfile_CE_Comp}.

\begin{table}
\caption{Simulation Parameters}
\centering
    \begin{tabular}{|l|l|}
        \hline
        Carrier frequency, $\fc$ & $28 \, \rm{GHz}$ \\ \hline
        Subcarrier spacing, $\deltaf$ & $120 \, \rm{kHz}$  \\ \hline
        Number of subcarriers, $N$ & $3330$  \\ \hline
        Bandwidth, $B$ & $400 \, \rm{MHz}$    \\ \hline
        Number of symbols, $M$ & $1120$   \\ \hline
        Transmit power, $\Pt$ & $20 \, \rm{dBm}$  \\ \hline
        ULA size at the ISAC TX, $\Ntx$ & $8$   
        \\ \hline
        ULA size at the sensing RX, $\Nrx$  &  $8$
        \\ \hline
        Noise PSD & $-174 \, \rm{dBm/Hz}$  
         \\ \hline
    \end{tabular}
    \label{tab_parameters}
\end{table}
\vspace{-3mm}
\subsection{Time-Frequency Trade-offs under Concurrent Transmission}\label{sec_tf_tradeoff}
To investigate time-frequency domain ISAC trade-offs, the performance of the algorithms under concurrent transmission, as outlined in Sec.~\ref{sec_conc}, is evaluated using different modulation orders for transmit symbols $\boldX$ in \eqref{eq_ynm}. We set $\rho = 0.8$ in \eqref{eq_ffm} and consider two sensing scenarios, \textit{low-SNR} and \textit{high-SNR} (quantified by $\snrmean$ defined in Footnote~\ref{fn_snr_mean}), which are crucial for highlighting the distinct characteristics of the algorithms, as extensively demonstrated earlier in Fig.~\ref{fig_rangeProfile_CE_Comp}.

\subsubsection{Low-SNR Sensing Scenario}
In the low-SNR scenario, we consider a single target with range $80 \, \rm{m}$, velocity $15 \, \rm{m/s}$, angle $10^\circ$ and RCS $-2 \, \rm{dBsm}$, leading to $\snrmean = -50.6 \, \rm{dB}$. The goal is to investigate whether the target is drowned out by its own interference under high-order modulations due to increased side-lobe levels as depicted in Fig.~\ref{fig_range_profile_RF_QPSK_QAM} and Fig.~\ref{fig_rangeProfile_CE_Comp}. In Fig.~\ref{fig_pd_vs_rcs_singleTarget_Concurrent_1024_QAM}, we show the probability of detection ($\pd$) of various algorithms relative to target RCS, employing $1024-$QAM modulation for $\boldX$. It is observed that the proposed LMMSE estimator significantly outperforms the RF estimator, achieving gains in $\pd$ as high as $1$ for a constant RCS and enabling the detection of targets with RCS values smaller by up to $6 \, \rm{dBsm}$ for a specified $\pd$. Furthermore, consistent with the discussions in Sec.~\ref{sec_int_lmmse} and findings in Fig.~\ref{fig_rangeProfile_twoTargets_lowSNR}, LMMSE and MF achieve the same performances. Thus, in compliance with Table~\ref{tab_guideline}, LMMSE and MF are suitable for low-SNR sensing, whereas RF leads to severe degradations in probability of detection. Moreover, LMMSE matches the detection performance of its genie-aided version, which assumes perfect knowledge of $\snr_b$ in \eqref{eq_lmmse_b}, validating the effectiveness of the SNR estimation approach in Algorithm~\ref{alg_beam_spec}. 
Finally, when comparing our proposed MIMO-OFDM radar sensing approach to the state-of-the-art, the RF method in Algorithm~\ref{alg_overall} substantially outperforms the one in \cite{5G_NR_JRC_analysis_JSAC_2022}. Indeed, the latter method is more suitable in a tracking scenario where, if the target angular sector is known, the TX and RX beams can be steered towards the target, thereby exploiting receive combining gain. Conversely, in the considered scanning/search scenario where the target angular sector is unknown, the method in \cite{5G_NR_JRC_analysis_JSAC_2022} fails to fully exploit the available measurements at RX antennas, as reflected in the RCS loss shown in Fig.~\ref{fig_pd_vs_rcs_singleTarget_Concurrent_1024_QAM}.

To investigate ISAC trade-offs, we explore the impact of modulation order on the sensing and communication performances. To this end, in Fig.~\ref{fig_pd_vs_modOrder_singleTarget_Concurrent}  we depict the probability of detection and achievable rate across various modulations, from QPSK to $1024-$QAM, for a constant RCS of $-2 \, \rm{dBsm}$. The detection curves reveal the robustness of the proposed LMMSE approach against increasing modulation order, showcasing its capability to mitigate target masking effects under high-order modulations. Conversely, the RF estimator suffers from substantial loss in probability of detection with increasing modulation order due to rising side-lobe levels, corroborating the findings in Fig.~\ref{fig_range_profile_RF_QPSK_QAM}. Furthermore, the achievable rate improves as modulation order increases, as expected, which suggests that LMMSE (along with its genie-aided, ideal version and MF) achieves much more favorable ISAC trade-offs compared to RF. Hence, employing LMMSE in the sensing receiver enables significant improvements in communication rates through high-order QAM signaling without compromising detection performance.

\begin{figure}[t]
        \begin{center}
        \subfigure[]{
			 \label{fig_pd_vs_rcs_singleTarget_Concurrent_1024_QAM}
			 \includegraphics[width=0.9\columnwidth]{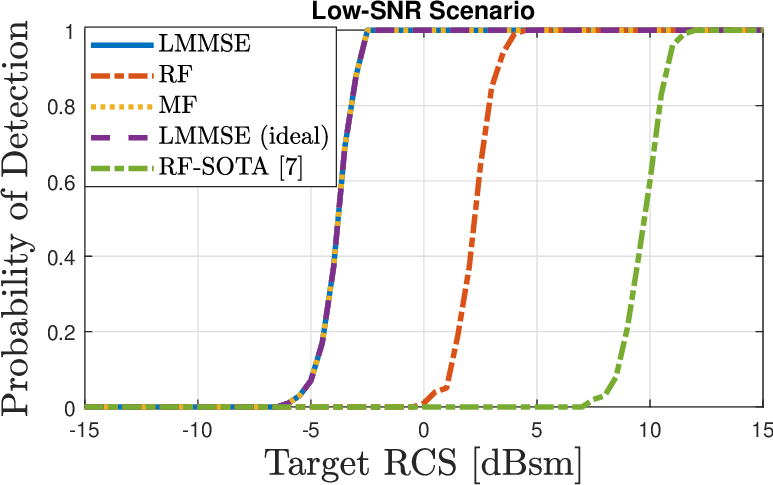}
		}
        \subfigure[]{
			 \label{fig_pd_vs_modOrder_singleTarget_Concurrent}
			 \includegraphics[width=0.9\columnwidth]{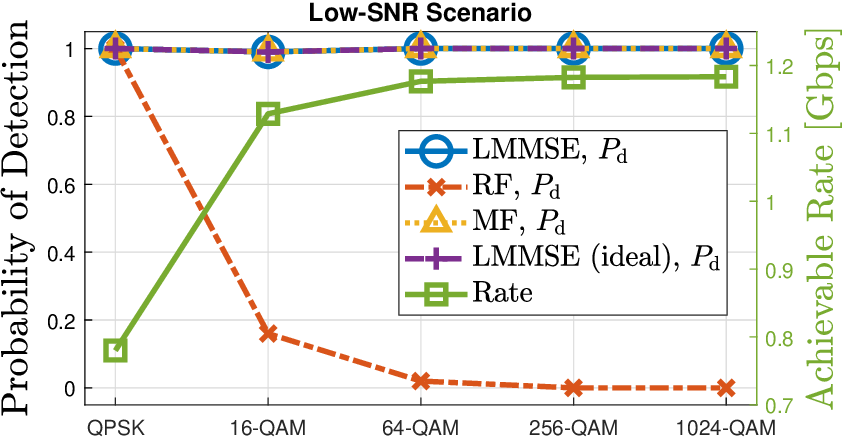}
		}
		
		\end{center}
		\vspace{-0.1in}
        \caption{ISAC performances in the low-SNR sensing scenario under concurrent transmission. \subref{fig_pd_vs_rcs_singleTarget_Concurrent_1024_QAM} Probability of detection with respect to target RCS using $1024-$QAM data. \subref{fig_pd_vs_modOrder_singleTarget_Concurrent} Probability of detection and achievable rate with respect to modulation order for target RCS of $-2 \, \rm{dBsm}$.}  
        \label{fig_lowSNR_TF_tradeoff}
\end{figure}

\subsubsection{High-SNR Sensing Scenario}\label{sec_num_highSNR_TF}
We now turn our attention to a high-SNR scenario which contains three targets with ranges of $(75, 5, 5) \, \rm{m}$, velocities of $(15, -10, -10) \, \rm{m/s}$, angles of $(10^\circ, 10^\circ, 18^\circ)$ and RCSs of $(5, 20, 5) \, \rm{dBsm}$, resulting in $\snrmean = 19.6 \, \rm{dB}$. In Fig.~\ref{fig_pd_vs_rcs_threeTargets_Concurrent_1024_QAM}, we show the probability of detection of Target-$1$ relative to its RCS with $1024-$QAM modulation. It is seen that the proposed LMMSE estimator outperforms both RF and MF benchmarks, indicating that it can effectively reduce side-lobe levels to prevent target masking when using varying-amplitude data. In particular, LMMSE provides up to $11 \, \rm{dBsm}$ gain in RCS for fixed $\pd$ over MF. In accordance with Table~\ref{tab_guideline} and Fig.~\ref{fig_rangeProfile_twoTargets_highSNR}, at high sensing SNRs, MF experiences significant loss in detection performance, while RF leads to slight degradations in performance compared to LMMSE. Moreover, in Fig.~\ref{fig_pd_vs_modOrder_threeTargets_Concurrent_highSNR} we report the probability of detection of Target-$1$ and achievable rate against modulation order for a fixed Target-$1$ RCS of $5 \, \rm{dBsm}$. We observe that MF fails to detect the target when employing $16$-QAM modulation and higher, while LMMSE maintains a constant $\pd$ of $1$ across all modulation orders. 

\subsubsection{Summary of Low-SNR and High-SNR Scenarios}
Fig.~\ref{fig_lowSNR_TF_tradeoff} and Fig.~\ref{fig_highSNR_TF_tradeoff} reveal that LMMSE provides consistently superior detection performance across different SNR regimes over existing benchmarks when using high-order QAM signaling. This establishes LMMSE as a universally effective strategy for OFDM radar sensing across a spectrum of SNR levels and modulation schemes.

\begin{figure}[t]
        \begin{center}
        \subfigure[]{
			 \label{fig_pd_vs_rcs_threeTargets_Concurrent_1024_QAM}
			 \includegraphics[width=0.9\columnwidth]{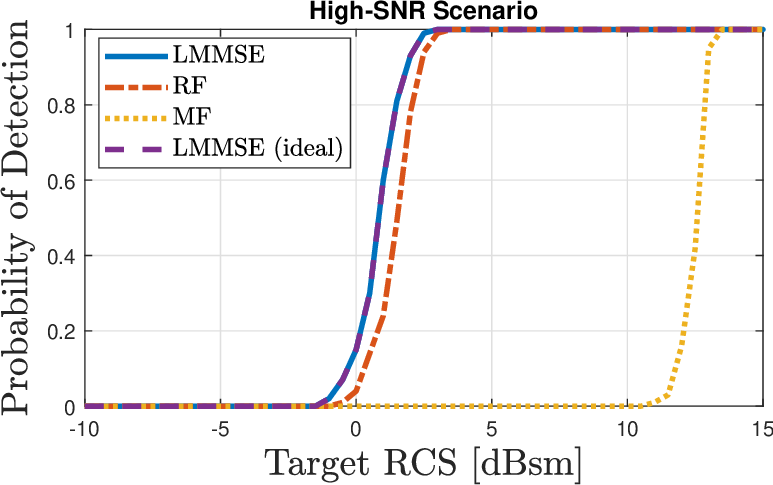}
		}
        \subfigure[]{
			 \label{fig_pd_vs_modOrder_threeTargets_Concurrent_highSNR}
			 \includegraphics[width=0.9\columnwidth]{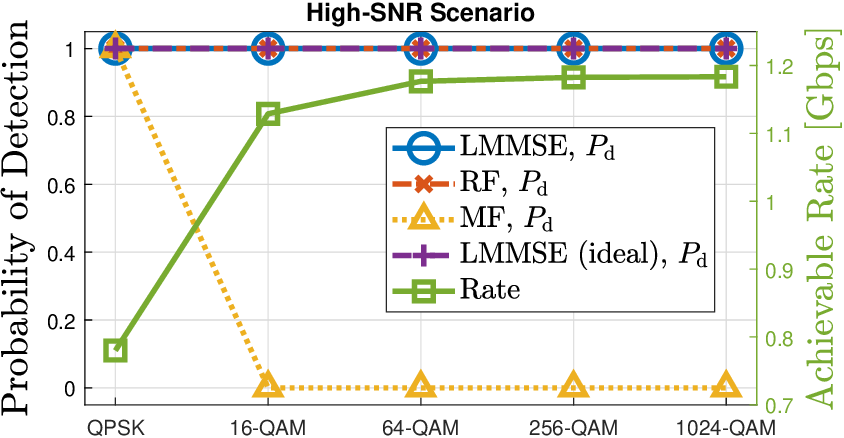}
		}
		
		\end{center}
		\vspace{-0.1in}
        \caption{ISAC performances in the high-SNR sensing scenario under concurrent transmission. \subref{fig_pd_vs_rcs_threeTargets_Concurrent_1024_QAM} Probability of detection with respect to RCS of Target-$1$ using $1024-$QAM data. \subref{fig_pd_vs_modOrder_threeTargets_Concurrent_highSNR} Probability of detection and achievable rate with respect to modulation order for Target-$1$ RCS of $5 \, \rm{dBsm}$.}  
        \label{fig_highSNR_TF_tradeoff}
        \vspace{0.1in}
\end{figure}


\vspace{-3mm}
\subsection{Spatial Trade-offs under Concurrent Transmission}\label{sec_num_spatial}
\vspace{-1mm}
To evaluate spatial domain ISAC trade-offs, we study the impact of the trade-off weight $\rho$ in \eqref{eq_ffm} on sensing and communication performances with varying modulation orders for $\boldX$, considering the same low-SNR and high-SNR scenarios as in Sec.~\ref{sec_tf_tradeoff}. The low-SNR scenario yields $\snrmean = (-60.3, -49.9) \, \rm{dB}$ for $\rho = (0, 1)$, respectively, while the high-SNR scenario leads to $\snrmean = (9.9, 20.3) \, \rm{dB}$ for $\rho = (0, 1)$. In the considered setting, the UE is located at an angle of $-30.2^\circ$, while the targets lie at $10^\circ$ and $18^\circ$. Hence, we expect to observe a trade-off between sensing and communications as $\rho$ varies. In Fig.~\ref{fig_QPSK_1024_QAM_spatial_tradeoff_curves}, we plot the ISAC trade-off curves obtained by the algorithms under consideration in the low-SNR and high-SNR scenarios as $\rho$ sweeps the interval $[0,1]$, using both QPSK and $1024-$QAM modulations.\footnote{Since all estimators are equivalent under QPSK signaling, as discussed in Sec.~\ref{sec_int_lmmse}, a single curve is plotted for the case of QPSK.}

\begin{figure}[t]
        \begin{center}
        \subfigure[]{
			 \label{fig_QPSK_1024_QAM_spatial_tradeoff_curves_lowSNR}
			 \includegraphics[width=0.9\columnwidth]{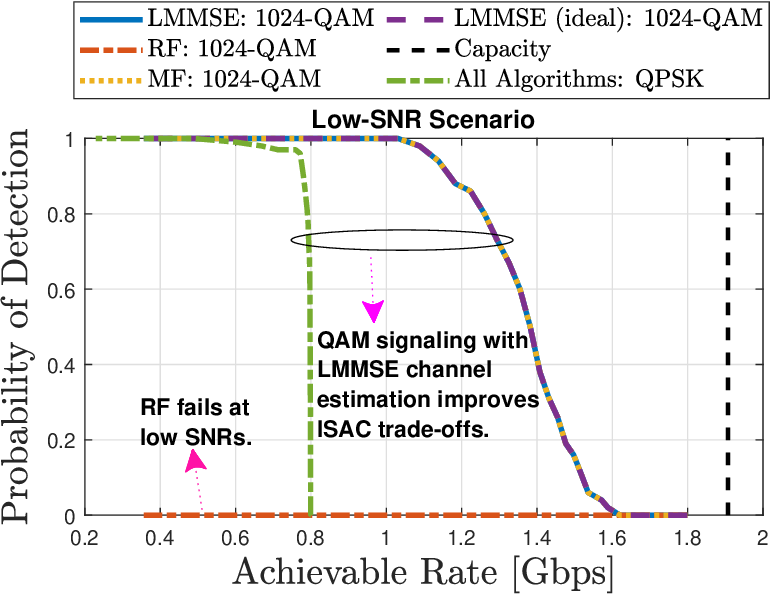}
		}
        \subfigure[]{
			 \label{fig_QPSK_1024_QAM_spatial_tradeoff_curves_highSNR}
			 \includegraphics[width=0.9\columnwidth]{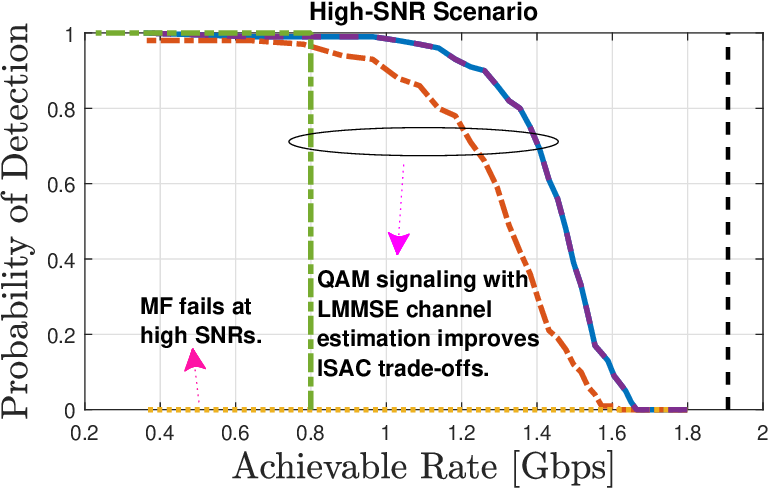}
		}
		
		\end{center}
		\vspace{-0.1in}
        \caption{ISAC trade-off curves under concurrent transmission for QPSK and $1024-$QAM modulations as the trade-off weight $\rho$ in \eqref{eq_ffm} varies over $[0,1]$. \subref{fig_QPSK_1024_QAM_spatial_tradeoff_curves_lowSNR} Low-SNR sensing scenario for target RCS of $-3 \, \rm{dBsm}$. \subref{fig_QPSK_1024_QAM_spatial_tradeoff_curves_highSNR} High-SNR sensing scenario for Target-$1$ RCS of $2 \, \rm{dBsm}$. }  
        \label{fig_QPSK_1024_QAM_spatial_tradeoff_curves}
\end{figure}

\subsubsection{Low-SNR Sensing Scenario}
Looking at the low-SNR results in Fig.~\ref{fig_QPSK_1024_QAM_spatial_tradeoff_curves_lowSNR}, LMMSE significantly improves ISAC trade-offs over RF  when utilizing $1024-$QAM signaling, aligning with insights from Table~\ref{tab_guideline} and Fig.~\ref{fig_pd_vs_rcs_singleTarget_Concurrent_1024_QAM}. In addition, as expected, the performance of LMMSE mirrors that of MF in low-SNR environments. When comparing QAM and QPSK signaling, we observe that LMMSE can achieve substantially better ISAC trade-offs with QAM signaling through its robustness in detection performance against increasing modulation order (as illustrated in Fig.~\ref{fig_pd_vs_modOrder_singleTarget_Concurrent}). This robustness enables the use of high-order QAM to boost rates without sacrificing sensing performance.\footnote{The steep trade-off curve for QPSK, in contrast to QAM, is attributed to the fact that the rate for QAM reaches its maximum at a higher SNR level than QPSK, leading to a more gradual decrease in $\pd$ as the rate increases. Note that as $\rho$ in \eqref{eq_ffm} approaches $0$, the sensing SNR decreases and the communication SNR increases.}

\subsubsection{High-SNR Sensing Scenario}
As observed in Sec.~\ref{sec_tf_tradeoff}, the trends for MF and RF become opposite when transitioning from low-SNR to high-SNR scenarios. Specifically, in high-SNR environments, MF fails to provide satisfactory ISAC trade-off performance whereas RF surpasses MF, albeit with a considerable performance gap compared to LMMSE. Hence, employing QAM signaling at the transmit side, in conjunction with the LMMSE algorithm at the sensing receiver, yields superior ISAC trade-offs over both RF and MF benchmarks, regardless of whether QAM or QPSK signaling is used.

\subsubsection{Summary of Low-SNR and High-SNR Scenarios}
In alignment with Table~\ref{tab_guideline} and the previous findings in Fig.~\ref{fig_lowSNR_TF_tradeoff} and Fig.~\ref{fig_highSNR_TF_tradeoff}, the spatial domain trade-off results in Fig.~\ref{fig_QPSK_1024_QAM_spatial_tradeoff_curves} demonstrate that LMMSE consistently outperforms the traditional OFDM radar sensing benchmarks MF and RF across various SNRs, modulation schemes and sensing-communication beamforming weights. Based on the findings in Fig.~\ref{fig_QPSK_1024_QAM_spatial_tradeoff_curves}, Table~\ref{tab_guideline2} provides rough guidelines on the choice of transmit signaling strategies and sensing channel estimators under different sensing SNRs and ISAC requirements.

\begin{table}
\caption{Guidelines on Which Transmit Signaling Strategy (QAM or QPSK) and Channel Estimator to Use Under Different SNR Regimes and ISAC Requirements}
\centering
    \begin{tabular}{|c|c|c|c|}
        \hline
        \textbf{ISAC}  & High $\pd$ & Medium $\pd$ & Low $\pd$ \\ 
        \textbf{Requirement} & Low Rate & Medium Rate & High Rate \\
        \hline
        \textit{Low}  & QPSK/ & QAM/ & QAM/ \\ 
        \textit{SNR} & All Estimators & LMMSE+MF & LMMSE+MF
        \\ \hline
        \textit{High} & QPSK/ & QAM/ & QAM/ \\ 
        \textit{SNR} & All Estimators & LMMSE & LMMSE \\
        \hline
    \end{tabular}
    \label{tab_guideline2}
   \vspace{0.1in}
\end{table}

\vspace{-4mm}
\subsection{Concurrent vs. Time-Sharing Transmission}

We now carry out a comparative analysis of Concurrent and Time-Sharing strategies, as depicted in Fig.~\ref{fig_isac_system} and described in \eqref{eq_ffm} and \eqref{eq_tst}, using the same scenarios in Sec.~\ref{sec_num_spatial}. Fig.~\ref{fig_conc_vs_timesharing} shows the trade-off curves achieved by Time-Sharing\footnote{Since only QPSK pilots are used for sensing in the Time-Sharing strategy (see Sec.~\ref{sec_tst} and Fig.~\ref{fig_isac_system}), all sensing algorithms are equivalent.} as the time-sharing ratio $\abss{\msens}/M$ in \eqref{eq_tst} sweeps the interval $[0,1]$, along with those belonging to the Concurrent strategy.

The comparison between Concurrent and Time-Sharing strategies reveals the effect of two counteracting factors, each prevailing under distinct operational conditions:
\begin{itemize}
    \item \textit{C-SNR-Boost: Full Data Utilization for Sensing:} Concurrent (C) strategy leverages the entire OFDM frame for sensing, offering a potential advantage in data utilization via SNR boosting over Time-Sharing, which limits sensing to dedicated pilots, reducing detection capabilities.
    \item \textit{NCM-Loss: Impact of Non-Constant-Modulus (NCM) QAM Data on Sensing:} While Concurrent transmission benefits from using QAM data for sensing, this does not guarantee enhanced detection over Time-Sharing. The presence of high side-lobe levels from QAM can be detrimental if the sensing algorithm is ill-equipped to handle them, placing Concurrent at a disadvantage against Time-Sharing, which relies on QPSK pilots for sensing.\footnote{It is worth emphasizing that the influence of these factors on sensing performance depends on the specific circumstances, suggesting that observed performance patterns may vary with changes in sensing and communication channel characteristics, time/power allocations between the two functionalities and the choice of sensing algorithm, as subsequent discussions will illustrate.}
\end{itemize}

\begin{figure}[t]
        \begin{center}
        \subfigure[]{
			 \label{fig_concurrent_vs_time_sharing_RCS_5_dBsm_lowSNR}
			 \includegraphics[width=0.9\columnwidth]{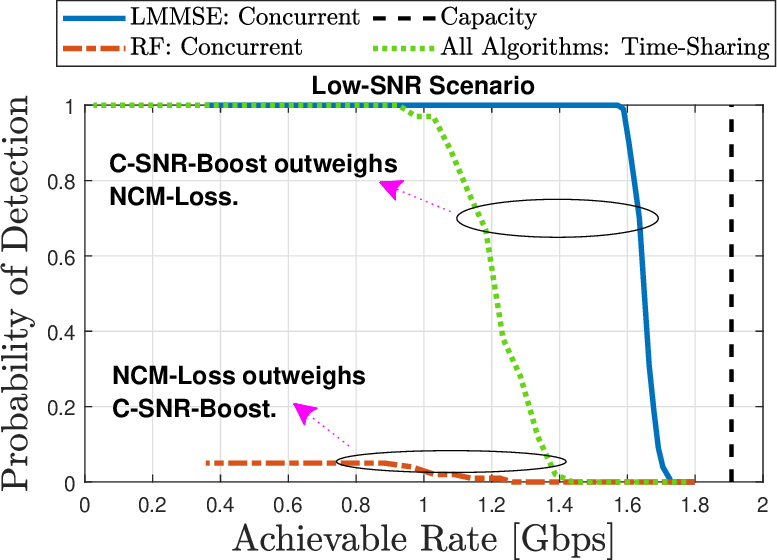}
		}
        \subfigure[]{
			 \label{fig_concurrent_vs_time_sharing_RCS_20_dBsm_highSNR}
			 \includegraphics[width=0.9\columnwidth]{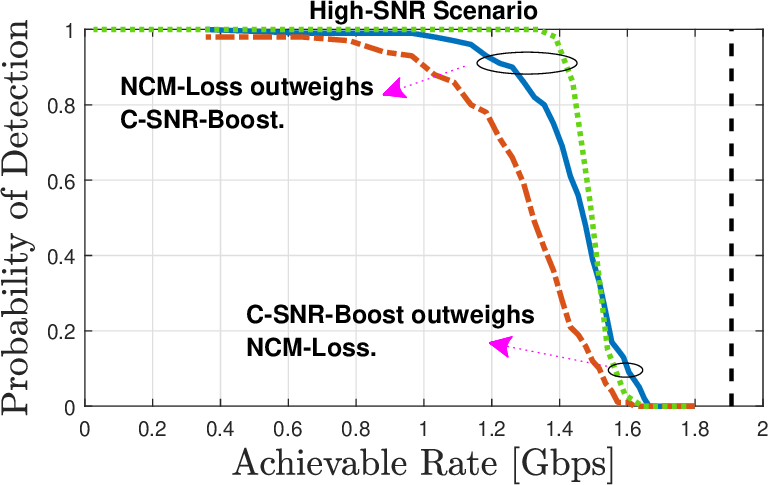}
		}
		
		\end{center}
		\vspace{-0.1in}
            \caption{ISAC trade-off curves under concurrent and time-sharing transmission strategies, where dedicated QPSK pilots are used for sensing in time-sharing and $1024-$QAM data are used for communications (see Fig.~\ref{fig_isac_system}). The time-sharing curves are obtained by sweeping $\abss{\msens}/M$ in \eqref{eq_tst} over $[0,1]$.  \subref{fig_concurrent_vs_time_sharing_RCS_5_dBsm_lowSNR} Low-SNR sensing scenario for target RCS of $0 \, \rm{dBsm}$. \subref{fig_concurrent_vs_time_sharing_RCS_20_dBsm_highSNR} High-SNR sensing scenario for target-$1$ RCS of $2 \, \rm{dBsm}$.}  
        \label{fig_conc_vs_timesharing}
\end{figure}

For the low-SNR scenario depicted in Fig.~\ref{fig_concurrent_vs_time_sharing_RCS_5_dBsm_lowSNR}, Concurrent transmission with QAM, in combination with LMMSE at the sensing receiver, yields significantly better ISAC trade-offs than Time-Sharing. This indicates that the full utilization of data for sensing (C-SNR-Boost) outweighs the challenges associated with using QAM data (NCM-Loss) in this scenario, confirming the effectiveness of the proposed LMMSE approach. The allocated sensing pilots in Time-Sharing are not sufficient to achieve $\pd$ as high as that achieved by Concurrent for an equivalent rate (corresponding to a certain $\abss{\msens}/M$ in Time-Sharing and a certain $\rho$ in Concurrent). Conversely, a scenario where NCM-Loss dominates C-SNR-Boost is seen when comparing the performance of the RF estimator in Concurrent transmission to Time-Sharing. Despite the utilization of the full OFDM frame with QAM data by the RF estimator in Concurrent transmission, it experiences a noticeable reduction in $\pd$ compared to Time-Sharing, which relies solely on QPSK pilots for sensing. This disadvantage for RF in Concurrent mode results from increased side-lobe levels at low SNRs, as previously shown in Fig.~\ref{fig_range_profile_RF_QPSK_QAM} and Fig.~\ref{fig_rangeProfile_CE_Comp}.

The high-SNR scenario in Fig.~\ref{fig_concurrent_vs_time_sharing_RCS_20_dBsm_highSNR} reveals trends in ISAC trade-offs that differ considerably from those observed in the low-SNR scenario, which highlights the scenario-specific nature of the relative weights of C-SNR-Boost and NCM-Loss. In this scenario, the high-SNR environment renders the use of dedicated QPSK pilots sufficiently effective for achieving a detection probability ($\pd$) of 1, making the full utilization of OFDM data for sensing (C-SNR-Boost) less critical up to a certain rate threshold. This effect is particularly noticeable until the rate approaches $1.4 \, \rm{Gbps}$, where $\pd$ of 1 is attainable with only a fraction of the transmit symbols dedicated to pilots. A key insight emerges when rates surpass $1.5 \, \rm{Gbps}$: Concurrent begins to outperform Time-Sharing as the reduced proportion of pilots no longer suffices for maintaining high $\pd$ in Time-Sharing. In contrast, exploitation of data in Concurrent provides additional SNR benefits, enhancing detection performance, thereby indicating the dominance of C-SNR-Boost over NCM-Loss in this rate regime. We note that such gains are possible with the use of the proposed LMMSE estimator, which can successfully counteract target masking effects, whereas RF (and, thus MF) with Concurrent transmission is outperformed by Time-Sharing. The threshold at which Concurrent gains an advantage over Time-Sharing varies with the scenario, influenced by the interplay between C-SNR-Boost and NCM-Loss. For instance, Concurrent outperforms Time-Sharing in all rate regimes in the low-SNR scenario.


\vspace{-3mm}
\subsection{Experimental Results}
In this part, to corroborate the findings in Fig.~\ref{fig_range_profile_RF_QPSK_QAM} and Fig.~\ref{fig_rangeProfile_CE_Comp}, we provide experimental results obtained via a full-duplex OFDM monostatic sensing setup at IMDEA Network laboratories. The data is collected using the Mimorph testbed from \cite{Lacruz_MOBISYS2021}, which was enhanced to support fully synchronized and concurrent transmit/receive operation. Millimeter wave front-end is formed by a Sivers IMA EVK operating at the 28~GHz frequency band. The front-end is composed by up/down converters and a $2  \times 8$ linear phased antenna array with analog beamforming capabilities. Experiments were collected in a indoor laboratory environment of $6 \, \rm{m} \times 8 \, \rm{m}$ with furniture. The testbed was located on one side of the room in the presence of a cylindrical metallic target, as shown in Fig.~\ref{fig_lab_setup}.

Fig.~\ref{fig_range_profile_experimental} shows the range profiles obtained by the considered algorithms using QPSK and $1024$-QAM modulations on experimentally collected data. We observe the self-interference peak resulting from full-duplex operation, along with the peak corresponding to the target at $2.76 \, \rm{m}$. Consistent with Fig.~\ref{fig_range_profile_RF_QPSK_QAM} and Fig.~\ref{fig_rangeProfile_CE_Comp}, the RF estimator produces substantially higher side-lobe levels with QAM modulation compared to QPSK modulation, particularly  beyond the $20 \, \rm{m}$ range. Moreover, the LMMSE estimator with QAM modulation (using $\snr_b = -20 \, \rm{dB}$ in \eqref{eq_lmmse_b}) can significantly reduce side-lobe levels and restore the range profile comparable to that achieved with QPSK modulation, in line with insights from Fig.~\ref{fig_rangeProfile_CE_Comp}. 


\begin{figure}
	\centering
	\includegraphics[width=0.8\linewidth]{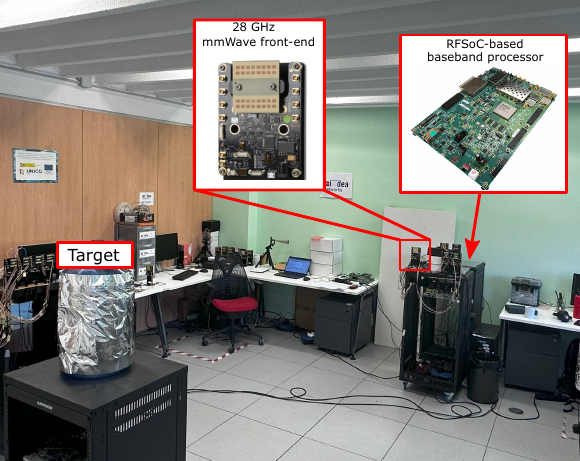}
	\vspace{-0.0in}
	\caption{Experimental setup with the testbed and a single target.} 
	\label{fig_lab_setup}
\end{figure}

\begin{figure}
	\centering
	\includegraphics[width=0.95\linewidth]{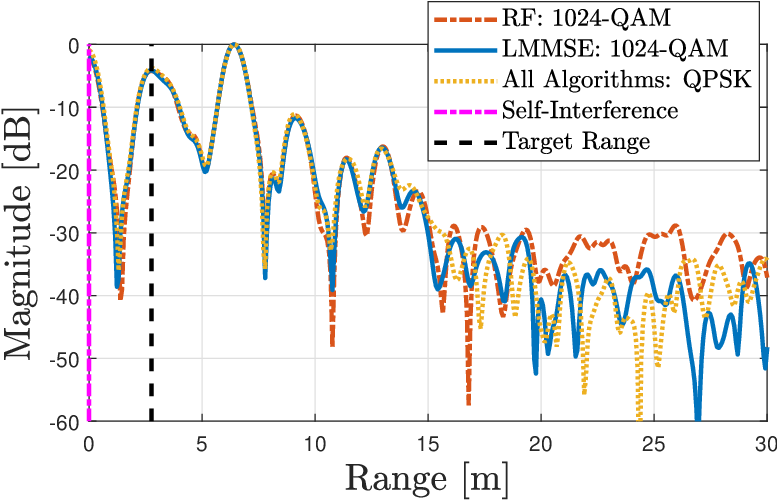}
	\caption{Range profiles obtained by different channel estimation algorithms with QPSK and $1024$-QAM modulations, using real-world experimental data collected via a monostatic full-duplex OFDM radar at $\fc = 28 \, \rm{GHz}$.} 
	\label{fig_range_profile_experimental}
\end{figure}

\vspace{-3mm}
\section{Concluding Remarks}
In this paper, we have investigated two fundamental trade-offs in monostatic OFDM ISAC systems: \textit{(i)} the \textit{time-frequency trade-off}, resulting from the choice of modulation scheme for random data, and \textit{(ii)} the \textit{spatial trade-off}, stemming from how ISAC transmit beamforming is applied to balance sensing and communication objectives. A novel LMMSE based sensing algorithm has been proposed to deal with the increased side-lobe levels induced by high-order QAM data, showcasing substantial improvements over existing OFDM radar sensing benchmarks. Moreover, two ISAC transmission strategies have been considered: \textit{(i)} \textit{Concurrent}, utilizing all data for sensing with power allocated between sensing and communication beams, and \textit{(ii)} \textit{Time-Sharing}, where sensing relies solely on dedicated pilots with time-multiplexing for sensing and communication beams. Extensive simulations demonstrate the superiority of LMMSE over the conventional RF and MF estimators under a wide range of operating conditions, while unveiling key insights into the ISAC trade-offs achieved by these strategies at different SNRs and modulation schemes. Moreover, two factors have been identified that influence the performance comparison between Concurrent and Time-Sharing, leading to scenario-dependent ISAC trade-off dynamic depending on the relative impact of these factors. As future research, we plan to focus on the same fundamental trade-offs in bistatic ISAC scenarios.
\vspace{-3.5mm}

\begin{appendices}

\section{Proof of Lemma~\ref{lemma_stat}}\label{app_lemma_stat}
\vspace{-1mm}
Using \eqref{eq_hhib_vec}, \eqref{eq_hhib_moments} and the definitions in Lemma~\ref{lemma_stat}, one can readily obtain $    \hhbar_{i,b} = \sum_{k=0}^{K-1} \Eee\{ \alpha_{b,k} \} \Eee\{\cc_b^\conj(\nu_k) \otimes \bb(\tau_k)  [\arx(\theta_k)]_i \} = \boldzero$.
As to the covariance, we have that
\begin{align} \label{eq_ccib_simple}
    \CC_{i,b} = \sum_{k_1=1}^K \sum_{k_2=1}^K \Eee\{ \alpha_{b,k_1} \alpha^\conj_{b,k_2}  \bphi_{k_1,b,i} \bphi^H_{k_2,b,i} \} \,, 
\end{align}
where $\bphi_{k,b,i} \triangleq \cc_b^\conj(\nu_k) \otimes \bb(\tau_k)  [\arx(\theta_k)]_i$. Due to the independence across the different parameters and the different targets, and using the definitions in Lemma~\ref{lemma_stat}, \eqref{eq_ccib_simple} simplifies to 
\begin{align} \label{eq_ccib_simple2}
    \CC_{i,b} &= \sum_{k=0}^{K-1}  \Eee\{ \abss{\alpha_{b,k}}^2 \}   \Eee\{\bphi_{k,b,i} \bphi^H_{k,b,i} \} \, ,
    \\ \nonumber
     &= \sum_{k=0}^{K-1}  \sigmaakb^2 \Eee\big\{ \cc_b^\conj(\nu_k) \cc_b^T(\nu_k) \big\} \otimes  \Eee\big\{ \bb(\tau_k)  \bb^H(\tau_k) \big\} \,,
\end{align}
where 
the mixed-product property of the Kronecker product is used. Using \eqref{eq_steer_delay}, it follows that $ \Eee\big\{ [\bb(\tau_k)  \bb^H(\tau_k)]_{n,n} \big\} = 1 $ and $\Eee\big\{ [\bb(\tau_k)  \bb^H(\tau_k)]_{n_1,n_2} \big\} = \int_{0}^{\frac{1}{\deltaf}} e^{j 2\pi (n_2-n_1) \deltaf \tau_k} \d \tau_k = 0$, 
for $n_1 \neq n_2$, which yields $\Eee\big\{ \bb(\tau_k)  \bb^H(\tau_k) \big\} = \Imatrix_N$. Similarly, using \eqref{eq_steer_doppler}, we obtain $\Eee\big\{ \cc_b^\conj(\nu_k) \cc_b^T(\nu_k)  \big\} = \Imatrix_{M_b}$. Plugging these into \eqref{eq_ccib_simple2} gives $\CC_{i,b} = \sigmaaab^2\Imatrix$.


\end{appendices}

\bibliographystyle{IEEEtran}
\bibliography{IEEEabrv,Sub/main}

\begin{thebibliography}{10}
\providecommand{\url}[1]{#1}
\csname url@samestyle\endcsname
\providecommand{\newblock}{\relax}
\providecommand{\bibinfo}[2]{#2}
\providecommand{\BIBentrySTDinterwordspacing}{\spaceskip=0pt\relax}
\providecommand{\BIBentryALTinterwordstretchfactor}{4}
\providecommand{\BIBentryALTinterwordspacing}{\spaceskip=\fontdimen2\font plus
\BIBentryALTinterwordstretchfactor\fontdimen3\font minus
  \fontdimen4\font\relax}
\providecommand{\BIBforeignlanguage}[2]{{%
\expandafter\ifx\csname l@#1\endcsname\relax
\typeout{** WARNING: IEEEtran.bst: No hyphenation pattern has been}%
\typeout{** loaded for the language `#1'. Using the pattern for}%
\typeout{** the default language instead.}%
\else
\language=\csname l@#1\endcsname
\fi
#2}}
\providecommand{\BIBdecl}{\relax}
\BIBdecl

\bibitem{jointRadCom_review_TCOM}
F.~Liu \emph{et~al.}, ``Joint radar and communication design: Applications,
  state-of-the-art, and the road ahead,'' \emph{IEEE Transactions on
  Communications}, vol.~68, no.~6, pp. 3834--3862, 2020.

\bibitem{Eldar_SPM_JRC_2020}
D.~Ma \emph{et~al.}, ``Joint radar-communication strategies for autonomous
  vehicles: Combining two key automotive technologies,'' \emph{IEEE Signal
  Processing Magazine}, vol.~37, no.~4, pp. 85--97, 2020.

\bibitem{JCAS_Survey_2022}
J.~A. Zhang \emph{et~al.}, ``Enabling joint communication and radar sensing in
  mobile networks—a survey,'' \emph{IEEE Communications Surveys Tutorials},
  vol.~24, no.~1, pp. 306--345, 2022.

\bibitem{LiuSeventy23}
F.~Liu \emph{et~al.}, ``Seventy years of radar and communications: The road
  from separation to integration,'' \emph{IEEE Signal Process. Mag.}, vol.~40,
  no.~5, pp. 106--121, 2023.

\bibitem{Fan_ISAC_6G_JSAC_2022}
------, ``Integrated sensing and communications: Toward dual-functional
  wireless networks for {6G} and beyond,'' \emph{IEEE Journal on Selected Areas
  in Communications}, vol.~40, no.~6, pp. 1728--1767, 2022.

\bibitem{sensingService6G_2024}
F.~Dong \emph{et~al.}, ``Sensing as a service in {6G} perceptive mobile
  networks: Architecture, advances, and the road ahead,'' \emph{IEEE Network},
  vol.~38, no.~2, pp. 87--96, 2024.

\bibitem{5G_NR_JRC_analysis_JSAC_2022}
L.~Pucci \emph{et~al.}, ``System-level analysis of joint sensing and
  communication based on {5G} new radio,'' \emph{IEEE Journal on Selected Areas
  in Communications}, vol.~40, no.~7, pp. 2043--2055, 2022.

\bibitem{Barneto2022TCOM}
C.~B. Barneto \emph{et~al.}, ``Beamformer design and optimization for joint
  communication and full-duplex sensing at mm-waves,'' \emph{IEEE Trans.
  Commun.}, vol.~70, no.~12, pp. 8298--8312, 2022.

\bibitem{OFDM_PN_TSP_Exploitation_2023}
M.~F. Keskin \emph{et~al.}, ``Monostatic sensing with {OFDM} under phase noise:
  From mitigation to exploitation,'' \emph{IEEE Transactions on Signal
  Processing}, vol.~71, pp. 1363--1378, 2023.

\bibitem{pegoraro2024jump}
J.~Pegoraro \emph{et~al.}, ``{JUMP: Joint communication and sensing with
  Unsynchronized transceivers Made Practical},'' \emph{IEEE Transactions on
  Wireless Communications}, 2024.

\bibitem{3gpp_r18_commMag}
X.~Lin, ``An overview of {5G} advanced evolution in {3GPP} release 18,''
  \emph{IEEE Communications Standards Magazine}, vol.~6, no.~3, pp. 77--83,
  2022.

\bibitem{RadCom_Proc_IEEE_2011}
C.~{Sturm} \emph{et~al.}, ``Waveform design and signal processing aspects for
  fusion of wireless communications and radar sensing,'' \emph{Proceedings of
  the IEEE}, vol.~99, no.~7, pp. 1236--1259, July 2011.

\bibitem{ICI_OFDM_TSP_2020}
F.~Zhang \emph{et~al.}, ``Joint range and velocity estimation with intrapulse
  and intersubcarrier {Doppler} effects for {OFDM}-based {RadCom} systems,''
  \emph{IEEE Transactions on Signal Processing}, vol.~68, pp. 662--675, 2020.

\bibitem{OFDM_DFRC_TSP_2021}
M.~F. Keskin \emph{et~al.}, ``Limited feedforward waveform design for {OFDM}
  dual-functional radar-communications,'' \emph{IEEE Transactions on Signal
  Processing}, vol.~69, pp. 2955--2970, 2021.

\bibitem{Liu_Reshaping_OFDM_ISAC_2023}
Z.~Du \emph{et~al.}, ``Reshaping the {ISAC} tradeoff under {OFDM} signaling: A
  probabilistic constellation shaping approach,'' \emph{arXiv preprint
  arXiv:2312.15941}, 2023.

\bibitem{PRS_ISAC_5G_TVT_2022}
Z.~Wei \emph{et~al.}, ``{5G} {PRS}-based sensing: A sensing reference signal
  approach for joint sensing and communication system,'' \emph{IEEE
  Transactions on Vehicular Technology}, vol.~72, no.~3, pp. 3250--3263, 2023.

\bibitem{li2023isac}
Y.~Li \emph{et~al.}, ``Frame structure and protocol design for sensing-assisted
  {NR-V2X} communications,'' \emph{IEEE Transactions on Mobile Computing}, pp.
  1--17, 2024.

\bibitem{5g_sensing_TAES_2023}
A.~Ksiezyk \emph{et~al.}, ``Opportunities and limitations in radar sensing
  based on {5G} broadband cellular networks,'' \emph{IEEE Aerospace and
  Electronic Systems Magazine}, vol.~38, no.~9, pp. 4--21, 2023.

\bibitem{refSig_6G_ISAC_2023}
\BIBentryALTinterwordspacing
Z.~Wei \emph{et~al.}, ``A {5G DMRS}-based signal for integrated sensing and
  communication system,'' 2024. [Online]. Available:
  \url{https://arxiv.org/abs/2312.02170}
\BIBentrySTDinterwordspacing

\bibitem{MIMO_OFDM_3D_estimation_2023}
Z.~Xiao \emph{et~al.}, ``A novel joint angle-range-velocity estimation method
  for {MIMO-OFDM ISAC} systems,'' \emph{arXiv preprint arXiv:2308.03387}, 2023.

\bibitem{ofdm_isac_silvia_2024}
S.~Mura \emph{et~al.}, ``Optimized waveform design for {OFDM}-based {ISAC}
  systems under limited resource occupancy,'' \emph{arXiv preprint
  arXiv:2406.19036}, 2024.

\bibitem{mimo_ofdm_isac_sidelobes_2024}
P.~Li \emph{et~al.}, ``{MIMO-OFDM ISAC} waveform design for range-{Doppler}
  sidelobe suppression,'' \emph{arXiv preprint arXiv:2406.17218}, 2024.

\bibitem{subspace_OFDM_ISAC_ICASSP_2024}
Y.~Lai \emph{et~al.}, ``Subspace-based detection in {OFDM ISAC} systems under
  different constellations,'' in \emph{ICASSP 2024}, 2024, pp. 9126--9130.

\bibitem{drt_isac_2023}
Y.~Xiong \emph{et~al.}, ``Generalized deterministic-random tradeoff of
  integrated sensing and communications: The sensing-optimal operating point,''
  in \emph{ICASSP 2024}, 2024, pp. 8481--8485.

\bibitem{inputOptOFDM_DFRC_2023}
Y.~Zhang \emph{et~al.}, ``Input distribution optimization in {OFDM}
  dual-function radar-communication systems,'' \emph{arXiv preprint
  arXiv:2305.06635}, 2023.

\bibitem{sensingRandom_2023}
S.~Lu \emph{et~al.}, ``Sensing with random signals,'' \emph{arXiv preprint
  arXiv:2309.02375}, 2023.

\bibitem{IT_JCAS_Gaussian_2023}
Y.~Xiong \emph{et~al.}, ``On the fundamental tradeoff of integrated sensing and
  communications under gaussian channels,'' \emph{IEEE Transactions on
  Information Theory}, vol.~69, no.~9, pp. 5723--5751, 2023.

\bibitem{OFDM_Radar_Corr_TAES_2020}
S.~Mercier \emph{et~al.}, ``Comparison of correlation-based {OFDM} radar
  receivers,'' \emph{IEEE Transactions on Aerospace and Electronic Systems},
  vol.~56, no.~6, pp. 4796--4813, 2020.

\bibitem{grossi2020adaptive}
E.~Grossi \emph{et~al.}, ``Adaptive detection and localization exploiting the
  {IEEE} 802.11ad standard,'' \emph{IEEE Transactions on Wireless
  Communications}, vol.~19, no.~7, pp. 4394--4407, 2020.

\bibitem{IT_JCAS_2022}
M.~Ahmadipour \emph{et~al.}, ``An information-theoretic approach to joint
  sensing and communication,'' \emph{IEEE Transactions on Information Theory},
  pp. 1--1, 2022.

\bibitem{dfrc_mimo_ofdm_TSP_2023}
Z.~Xu \emph{et~al.}, ``A bandwidth efficient dual-function radar communication
  system based on a {MIMO} radar using {OFDM} waveforms,'' \emph{IEEE
  Transactions on Signal Processing}, vol.~71, pp. 401--416, 2023.

\bibitem{SI_5G_2015}
Z.~Zhang \emph{et~al.}, ``Full duplex techniques for {5G} networks:
  self-interference cancellation, protocol design, and relay selection,''
  \emph{IEEE Communications Magazine}, vol.~53, no.~5, pp. 128--137, 2015.

\bibitem{OFDM_Radar_Phd_2014}
M.~Braun, ``{OFDM} radar algorithms in mobile communication networks,''
  \emph{Karlsruher Institutes f{\"u}r Technologie}, 2014.

\bibitem{80211_Radar_TVT_2018}
P.~{Kumari} \emph{et~al.}, ``{IEEE} 802.11ad-based radar: An approach to joint
  vehicular communication-radar system,'' \emph{IEEE Transactions on Vehicular
  Technology}, vol.~67, no.~4, pp. 3012--3027, April 2018.

\bibitem{OFDM_FD_LTE_2019}
C.~B. Barneto \emph{et~al.}, ``Full-duplex {OFDM} radar with {LTE} and {5G NR}
  waveforms: Challenges, solutions, and measurements,'' \emph{IEEE Transactions
  on Microwave Theory and Techniques}, vol.~67, no.~10, pp. 4042--4054, 2019.

\bibitem{OFDM_5G_6G_Sensing_TWC_2021}
S.~D. Liyanaarachchi \emph{et~al.}, ``Optimized waveforms for {5G–6G}
  communication with sensing: Theory, simulations and experiments,'' \emph{IEEE
  Transactions on Wireless Communications}, vol.~20, no.~12, pp. 8301--8315,
  2021.

\bibitem{MIMO_OFDM_ICI_JSTSP_2021}
M.~F. Keskin \emph{et~al.}, ``{MIMO-OFDM} joint radar-communications: Is {ICI}
  friend or foe?'' \emph{IEEE Journal of Selected Topics in Signal Processing},
  vol.~15, no.~6, pp. 1393--1408, 2021.

\bibitem{multibeam_JRC_TVT}
J.~A. Zhang \emph{et~al.}, ``Multibeam for joint communication and radar
  sensing using steerable analog antenna arrays,'' \emph{IEEE Transactions on
  Vehicular Technology}, vol.~68, no.~1, pp. 671--685, 2019.

\bibitem{dedicatedSensingISAC_2024}
M.~B. Salman \emph{et~al.}, ``When are sensing symbols required for {ISAC}?''
  \emph{IEEE Transactions on Vehicular Technology}, pp. 1--6, 2024.

\bibitem{overview_SP_JCS_JSTSP_2021}
J.~A. Zhang \emph{et~al.}, ``An overview of signal processing techniques for
  joint communication and radar sensing,'' \emph{IEEE Journal of Selected
  Topics in Signal Processing}, vol.~15, no.~6, pp. 1295--1315, 2021.

\bibitem{richards2005fundamentals}
M.~A. Richards, \emph{Fundamentals of Radar Signal Processing}.\hskip 1em plus
  0.5em minus 0.4em\relax Tata McGraw-Hill Education, 2005.

\bibitem{Zohair_5G_errorBounds_TWC_2018}
Z.~Abu-Shaban \emph{et~al.}, ``Error bounds for uplink and downlink {3D}
  localization in {5G} millimeter wave systems,'' \emph{IEEE Transactions on
  Wireless Communications}, vol.~17, no.~8, pp. 4939--4954, 2018.

\bibitem{sensingAssistedComm_TWC_2020}
F.~Liu \emph{et~al.}, ``Radar-assisted predictive beamforming for vehicular
  links: Communication served by sensing,'' \emph{IEEE Transactions on Wireless
  Communications}, vol.~19, no.~11, pp. 7704--7719, 2020.

\bibitem{MIMO_OFDM_radar_TAES_2020}
G.~{Hakobyan} \emph{et~al.}, ``{OFDM-MIMO} radar with optimized nonequidistant
  subcarrier interleaving,'' \emph{IEEE Transactions on Aerospace and
  Electronic Systems}, vol.~56, no.~1, pp. 572--584, 2020.

\bibitem{MIMO_OTFS_ISAC_2023}
M.~F. Keskin \emph{et~al.}, ``Integrated sensing and communications with
  {MIMO-OTFS: ISI/ICI} exploitation and delay-doppler multiplexing,''
  \emph{IEEE Transactions on Wireless Communications}, pp. 1--1, 2024.

\bibitem{ce_ofdm_95}
J.-J. van~de Beek \emph{et~al.}, ``On channel estimation in {OFDM} systems,''
  in \emph{1995 IEEE 45th Vehicular Technology Conference. Countdown to the
  Wireless Twenty-First Century}, vol.~2, 1995, pp. 815--819 vol.2.

\bibitem{reciprocalFilter_OFDM_2023}
J.~T. Rodriguez \emph{et~al.}, ``Supervised reciprocal filter for {OFDM} radar
  signal processing,'' \emph{IEEE Transactions on Aerospace and Electronic
  Systems}, pp. 1--22, 2023.

\bibitem{kay1993fundamentals}
S.~M. Kay, \emph{Fundamentals of Statistical Signal Processing: Estimation
  Theory}.\hskip 1em plus 0.5em minus 0.4em\relax Prentice Hall, 1993.

\bibitem{delayTarik_TWC_2022}
T.~Kazaz \emph{et~al.}, ``Delay estimation for ranging and localization using
  multiband channel state information,'' \emph{IEEE Transactions on Wireless
  Communications}, vol.~21, no.~4, pp. 2591--2607, 2022.

\bibitem{ge2023v2x}
Y.~Ge \emph{et~al.}, ``{V2X} sidelink positioning in {FR1}: Scenarios,
  algorithms, and performance evaluation,'' \emph{IEEE Journal on Selected
  Areas in Communications}, pp. 1--1, 2024.

\bibitem{dbscan}
M.~Ester \emph{et~al.}, ``A density-based algorithm for discovering clusters in
  large spatial databases with noise,'' in \emph{kdd}, vol.~96, no.~34, 1996,
  pp. 226--231.

\bibitem{dbscan_ComMag_2018}
R.~He \emph{et~al.}, ``Clustering enabled wireless channel modeling using big
  data algorithms,'' \emph{IEEE Communications Magazine}, vol.~56, no.~5, pp.
  177--183, 2018.

\bibitem{thomas2006elements}
M.~Thomas \emph{et~al.}, \emph{Elements of information theory}.\hskip 1em plus
  0.5em minus 0.4em\relax Wiley-Interscience, 2006.

\bibitem{Lacruz_MOBISYS2021}
J.~O. Lacruz \emph{et~al.}, ``{A Real-Time Experimentation Platform for sub-6
  GHz and Millimeter-Wave MIMO Systems},'' in \emph{ACM MobiSys'21}, 2021.

\end{thebibliography}

\end{document}